\theoremstyle{theorem}
\theoremstyle{empty}
\newtheorem{duplicate}{NameIgnored}
\newcommand{\cls}[1]{$\mathsf{#1}$}
\DeclareSymbolFont{bbold}{U}{bbold}{m}{n}
\DeclareSymbolFontAlphabet{\mathbbold}{bbold}
\DeclareSymbolFont{extraitalic}      {U}{zavm}{m}{it}
\DeclareMathSymbol{\stigma}{\mathord}{extraitalic}{168}
\DeclareMathSymbol{\Stigma}{\mathord}{extraitalic}{167}
\newcommand{\proofpart}[2]{%
  \par
  \addvspace{\medskipamount}%
  \noindent\emph{Part #1: #2}\par\nobreak
  \addvspace{\smallskipamount}%
  \@afterheading
}
\newcommand{\QCTS}{\cls{Quantum}-\cls{Clock}-\cls{Ternary}-\cls{SAT} }
\title{Quantum Constraint Problems can be complete for \cls{BQP}, \cls{QCMA}, and more}
\author{Alex Meiburg \\ \footnotesize{ameiburg@ucsb.edu}}
\date{%
    \small{University of California, Santa Barbara}\\%
}
\begin{document}
\maketitle

A quantum constraint problem is a frustration-free Hamiltonian problem: given a collection of local operators, is there a state that is in the ground state of each operator simultaneously? It has previously been shown that these problems can be in \cls{P}, \cls{NP}-complete, \cls{MA}-complete, or \cls{QMA_1}-complete, but this list has not been shown to be exhaustive. We present three quantum constraint problems, that are (1) \cls{BQP_1}-complete (also known as \cls{coRQP}), (2) \cls{QCMA}-complete and (3) \cls{coRP}-complete. This provides the first natural complete problem for \cls{BQP_1}. We also show that all quantum constraint problems can be realized on qubits, a trait not shared with classical constraint problems. These results suggest a significant diversity of complexity classes present in quantum constraint problems.

\section{Background}
\subsection{Classical constraint problems}
The classical notion of {\em constraint problem} or {\em constraint satisfaction problem} (CSP) take the form of a {\em domain} of variable values, and {\em clauses}: relations on a set of variables. Usually these terms are specifically in reference to {\em finite} constraint problems, where the domain $D$ is finite, and the clauses $C$ are of bounded arity $k$. A clause of arity $m \le k$ is a subset of $D^k$. We will focus our attention on finite CSPs, referring to them simply as CSPs.
\par Four representative examples could be \cls{2SAT}, \cls{3SAT}, \cls{3COLOR}, and \cls{Mod3} (the set of linear equations in variables mod 3, where each equation has at most 3 variables). A CSP {\em instance} is a finite number of variables $n$, and a list of clauses applied to certain variables. The instance is satisfiable iff there is an assignment $A : [n] \to D$ such that, for every clause $c \in C$ applied to variables $(v_1, \dots v_m)$, the assigned values $(A(v_1),\dots A(v_m)) \in c$. The problem corresponding to a CSP is determining the satisfiabiliy of its instances. \cls{2SAT} and \cls{Mod3} are in \cls{P}, while \cls{3SAT} and \cls{3COLOR} are NP-Complete, that is, in \cls{NPC}.
\par A landmark theorem by Dmitry Zhuk \cite{zhuk}, the so-called CSP Dichotomy Theorem, showed that {\em every} constraint problem is either in \cls{P} or \cls{NPC}. This does not rule out the possibility of other NP-Intermediate languages (the set \cls{NP}$\setminus($\cls{P}$\cup$\cls{NPC}$)$), problems that cannot be expressed as {\bf constraint} problems in particular. Problems such as graph isomorphism or integer factorization are believed to be in this class. A single instance of graph isomorphism (\cls{GI}) can be easily encoded as a single instance of a constraint problem (such as \cls{3SAT}), but then there are other instances of \cls{3SAT} that are much harder than mere graph isomorphism. Finding a constraint problem that {\em only} permitted the expression of \cls{GI} problems would immediately yield either a polynomial time algorithm for \cls{GI}, or a subexponential algorithm for \cls{NPC}, by Zhuk's result. The CSP Dichotomy Theorem also provides a systematically checkable condition for whether a problem is in \cls{P} or \cls{NPC}, the existence of a {\em polymorphism}.
\par A natural question is how this result might translate to the world of quantum problems. We define the quantum version of constraint problems, and emphasize the distinction from quantum {\em optimization} problems. We review known results about quantum constraint problems. The main contribution is providing three new quantum constraint problems, that are complete for the classes \cls{BQP_{1}=coRQP}, \cls{QCMA}, and \cls{coRP}. These imply that any putative quantum dichotomy theorem would need at least 7 distinct cases -- or a proof that some of these 7 complexity classes are actually equal to one another -- in stark contrast to the 2 cases in the classical case.

\subsection{Quantum Constraint Problems}
Quantum Constraint Satisfaction Problems, or QCSPs, can be viewed as a quantum version of a CSP or as a question about frustration free Hamiltonians. A QCSP has a domain size $d$, and a set of clauses or {\em interactions} $C = \{\mathcal{H}_i\}$. A clause $\mathcal{H}_i$ of arity $m$ is a Hermitian operator on the space of $m$ many $d$-qudits, $(\mathbb{C}^d)^{\otimes m}$. We require that each clause is a projector, i.e. $\mathcal{H}_i^2 = \mathcal{H}_i$. For a problem instance on $n$ variables, the interactions $H_i$ extend naturally to operators on the whole Hilbert space. If the arity of all interactions is at most $k$, then the QCSP is {\em $k$-local}. An instance of a QCSP is then a collection of the interactions applied at different qudits. We use fonts to distinguish the clause types of the QCSP $\mathcal{H}_i$, from the particular clauses of the instance $H_i$. An instance is satisfiable if there is a state $\ket{\psi} \neq 0$ such that $\forall_i H_i \ket{\psi} = 0$, equivalently if the total Hamiltonain $H = \sum_i H_i$ satisfies $H \ket{\psi} = 0$.
\par This can be viewed physically as the question, is $H$ frustration-free? Frustration-free Hamiltonians have applications in one-way computation\cite{Chen09}, and are often easier to study in terms of entanglement structure.
\par If there is a satisfying assignment (a YES instance), we expect we should be able to verify this state exactly, and accept with probability one. If the ground state has positive energy bounded from below by an inverse polynomial $1/p(n)$, then running $p(n)$ rounds should suffice to detect the failure with high probability. But if the true ground state has positive but exponentially small energy, we may be unable to observe this small energy, and erroneously accept. Thus this paper will discuss a promise problem variants of QCSPs, which excludes superpolynomially small gaps. We are given a constant $b > 1/poly(n)$, and promised that the either the total Hamiltonian has a frustration-free ground state, or that the ground state has energy at least $b$. All complexity classes mentioned herein are formulated as promise problem classes, with perhaps with trivial promises (\cls{P} and \cls{NP}, which requires no promise), and all completeness theorems refer to promise problem completeness. Probabilistic classes such as \cls{QMA} are semantic classes, not syntactic, which largely precludes the possibility of a non-promise problem being complete; and so virtually all discussion of the classes concerns their promise variants. \cite{Brav09} discusses this distinction in more detail.

\subsection{Previous Results}
\par We review some known statements about quantum constraint problems.
\begin{proposition}[Folklore.]
Every classical CSP can be efficiently mapped to a corresponding QCSP, preserving satisfiability.
\end{proposition}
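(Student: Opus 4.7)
My plan is to realize each classical clause as a diagonal projector in the computational basis. Concretely, for a classical clause $c \subseteq D^m$ of arity $m$, define the quantum clause
$$\mathcal{H}_c \;=\; \sum_{(x_1,\ldots,x_m)\,\in\, D^m \setminus c} \ket{x_1,\ldots,x_m}\bra{x_1,\ldots,x_m}.$$
This is a sum of mutually orthogonal rank-one projectors in $(\mathbb{C}^d)^{\otimes m}$, hence itself a projector, and annihilates exactly those basis states whose $m$-tuple of labels lies in $c$. The reduction maps each CSP instance to the QCSP instance with the same qudit dimension $d$, the same variables, and with every classical application of $c$ replaced by the application of $\mathcal{H}_c$ to the same tuple of variables. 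This description is polynomial in the size of the CSP (each $\mathcal{H}_c$ has size at most $d^{2m}$, where $m \le k$ is bounded), and $k$-locality is preserved.

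For the YES direction, if $A : [n] \to D$ satisfies every classical clause, then the product computational basis state $\ket{A} := \ket{A(1)} \otimes \cdots \otimes \ket{A(n)}$ satisfies $H_i \ket{A} = 0$ for every clause, because every local tuple of labels induced by $A$ lies in the associated classical clause. For the NO direction, observe that each $H_i$ is diagonal in the global computational basis, hence so is $H = \sum_i H_i$. The spectrum of $H$ is therefore exactly the multiset $\{\bra{A} H \ket{A} : A \in D^n\}$, and classical unsatisfiability means that every $A$ violates at least one clause, so $\bra{A} H \ket{A} \ge 1$ for all $A$. Consequently no quantum state achieves energy below $1$, and in particular no frustration-free state exists.

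Because the ground-state energy is always either exactly $0$ or at least $1$, the promise gap $b > 1/\mathrm{poly}(n)$ is satisfied trivially with room to spare, so the reduction respects the promise problem formulation. There is no substantive obstacle: the only point requiring justification is that a Hamiltonian diagonal in the computational basis attains its minimum eigenvalue on a basis state, which is immediate from its spectral decomposition. The construction also makes it manifest that classical \cls{NP}-hardness transfers directly into the QCSP setting, for instance by taking the image of \cls{3SAT}.
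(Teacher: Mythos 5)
Your construction is exactly the paper's intended one-line proof made explicit: write each classical clause as a diagonal projector in the computational basis, and note that a diagonal Hamiltonian attains its minimum on a basis state so classical and quantum satisfiability coincide (with a gap of at least $1$). Correct, and the same approach as the paper.
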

This occurs by simply writing each classical clause in the classical basis, where they are diagonal.
\begin{definition}[$k$-QSAT \cite{Brav06}]
The QCSP $k$-\cls{QSAT} is the QCSP with $d=2$ qubits, and with clauses $C$ as the set of all $k$-local interactions.
\end{definition}
\begin{theorem}[\cite{Brav06}]
2-\cls{QSAT} is in \cls{P}. 
\end{theorem}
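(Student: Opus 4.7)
The plan is to establish that $2$-\cls{QSAT} is in \cls{P} by exhibiting a polynomial-time algorithm based on a product-state reduction. First I would reduce to the case where every clause is a rank-$1$ projector: any $2$-qubit projector $\Pi$ of rank $r$ spectrally decomposes as $\Pi = \sum_{i=1}^{r} |\phi_i\rangle\langle\phi_i|$, and since $\Pi|\psi\rangle = 0$ iff $\langle\phi_i|\psi\rangle = 0$ for every $i$, replacing $\Pi$ by its rank-$1$ constituents preserves the set of satisfying states and only polynomially increases the instance size. A clause $|\phi\rangle\langle\phi|$ on qubits $i,j$ then asks the global state to be orthogonal to $|\phi\rangle$ on those two sites.

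The key structural lemma I would prove is: if a $2$-\cls{QSAT} instance is satisfiable, then it admits a \emph{product-state} satisfying assignment $|\alpha_1\rangle\otimes\cdots\otimes|\alpha_n\rangle$. Granting this, the problem reduces to finding single-qubit vectors such that $(\langle\alpha_i|\otimes\langle\alpha_j|)|\phi_{ij}\rangle = 0$ for every clause. I would then classify each clause by the Schmidt rank of $|\phi\rangle$. If $|\phi\rangle = |a\rangle|b\rangle$ is a product (Schmidt rank $1$), the constraint factors disjunctively as ``$\langle\alpha_i|a\rangle = 0$ or $\langle\alpha_j|b\rangle = 0$''. If $|\phi\rangle$ has Schmidt rank $2$, then fixing $|\alpha_i\rangle$ away from a single exceptional direction uniquely determines $|\alpha_j\rangle$ up to scalar via the bilinearity of the inner product.

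The algorithm exploits this dichotomy. Entangled clauses propagate: once any qubit in a connected entangled component is assigned, the rest of the component is forced (up to scalar) in polynomial time, and any conflict during propagation rules that component out. The remaining product clauses become $2$-\cls{SAT}-like disjunctions of the form ``$|\alpha_i\rangle \perp |a\rangle$ or $|\alpha_j\rangle \perp |b\rangle$'', which can be solved by an adaptation of the classical linear-time $2$-\cls{SAT} algorithm. I would verify that either (i) propagation forces a contradiction, (ii) the resulting product assignment satisfies all clauses, or (iii) some qubits remain free, in which case a genericity argument shows an almost-arbitrary choice avoids the finitely many bad directions in $\mathbb{C}^2$ created by the remaining product clauses, and only a polynomial number of seed assignments need be tried.

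The principal obstacle is the product-state lemma itself; everything else is bookkeeping and a careful merge of propagation with $2$-\cls{SAT}. I would attack it by induction on the number of qubits: fix a satisfying $|\psi\rangle$ and a qubit $i$, write $|\psi\rangle = |0\rangle_i|\psi_0\rangle + |1\rangle_i|\psi_1\rangle$, and analyze which single-qubit substitutions $|\alpha_i\rangle$ preserve the constraint $\langle\phi_{ij}|\alpha_i\otimes\cdot\rangle \in \mathrm{supp}$ of the neighbors' reduced states for every clause touching $i$. Showing that a product substitution compatible with every neighbor always exists --- especially when several entangled clauses share the qubit --- is the delicate step, and is where the bulk of Bravyi's argument is spent; I would handle it by a linear-algebraic argument on the span of $\{|\psi_0\rangle, |\psi_1\rangle\}$, then reduce to a strictly smaller instance and invoke induction.
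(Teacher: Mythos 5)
The paper cites this theorem from Bravyi \cite{Brav06} without reproducing a proof, so there is no internal argument to measure against; I compare instead against the known argument. Your proposal matches the standard approach: reduce to rank-$1$ projectors, invoke a product-state solution lemma, classify clauses by the Schmidt rank of $\ket{\phi}$, propagate forced assignments along entangled (Schmidt rank $2$) clauses, and reduce the remaining product clauses to a $2$-\cls{SAT}-like disjunctive system. One small inaccuracy: when $\ket{\phi}$ has Schmidt rank $2$, the associated $2\times 2$ coefficient matrix is invertible, so \emph{every} nonzero $\ket{\alpha_i}$ forces $\ket{\alpha_j}$ uniquely up to scalar; there is no exceptional direction to avoid. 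The genuinely delicate parts, which you correctly flag, are the product-state lemma and bounding how many seed assignments must be tried per connected component of entangled clauses. Bravyi's treatment does this by concretely chasing how a rank-$2$ clause transfers single-qubit states around a component (checking consistency around cycles and against the dangling product clauses), rather than by an abstract induction on the qubit count; that case analysis is exactly what yields the polynomial bound on the number of seeds. Your sketch identifies the right architecture but leaves that quantitative step unargued.
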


\begin{definition}[\cls{QMA_1} \cite{Goss13}]
A language $L$ belongs to the class \cls{QMA_1} iff there is a uniform family of quantum circuits $U$ of polynomial size, such that for an input $x$:

\noindent \textbf{Perfect Completeness}: If $x\in L$, then there exists a state $\ket{y}$, such that measuring the first qubit of $U\ket{x}\ket{y}$ is 1 with probability 1.

\noindent \textbf{Soundness}: If $x\not\in L$, then for any state $\ket{y}$, such that measuring the first qubit of $U\ket{x}\ket{y}$ is 0 with probability at least $2/3$.
\end{definition}

\begin{theorem}[\cite{Goss13}]
3-\cls{QSAT} is \cls{QMA_1} complete. 
\end{theorem}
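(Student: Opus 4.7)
The plan is to prove both directions separately: containment of 3-\cls{QSAT} in \cls{QMA_1}, and \cls{QMA_1}-hardness via a circuit-to-Hamiltonian construction with perfect completeness.

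For containment, I would give the natural verifier: Merlin sends $\ket{\psi}$, Arthur picks a uniformly random clause $H_i$ and performs the projective measurement $\{H_i, I-H_i\}$, accepting on the $I-H_i$ outcome. Since each $H_i$ is a projector, on a YES instance the true satisfying state lies in $\ker H_i$ for every $i$, giving acceptance with probability exactly $1$. On a NO instance the promise $\langle\psi| H |\psi\rangle \geq b = 1/\mathrm{poly}(n)$ guarantees expected rejection probability at least $b/m \geq 1/\mathrm{poly}(n)$ for any witness, which amplifies to a constant soundness gap by standard (completeness-preserving) \cls{QMA_1} amplification.

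For hardness, I would do a Kitaev-style history-state reduction adapted to frustration-freeness. Given an arbitrary \cls{QMA_1} verifier $U = U_T \cdots U_1$ built from $1$- and $2$-qubit gates, introduce a witness register, an ancilla register, and a clock register, and write down projectors enforcing: (i) clock validity (the clock is a legal encoding), (ii) input initialization (ancillas are $\ket{0}$ at $t=0$), (iii) propagation $H_{\mathrm{prop},t} = \tfrac{1}{2}\bigl(\ketbra{t}{t} + \ketbra{t{+}1}{t{+}1}\bigr) \otimes I - \tfrac{1}{2}\ketbra{t{+}1}{t} \otimes U_t - \tfrac{1}{2}\ketbra{t}{t{+}1} \otimes U_t^\dagger$, and (iv) an output projector $H_{\mathrm{out}} = \ketbra{T}{T}_{\mathrm{clock}} \otimes \ketbra{0}{0}_{\mathrm{out}}$. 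The crucial observation, and the one place the construction needs \cls{QMA_1} rather than \cls{QMA}, is that on a YES instance the history state $\ket{\eta} = \tfrac{1}{\sqrt{T+1}} \sum_t \ket{t} \otimes U_t \cdots U_1 \ket{\psi, 0}$ has exactly zero expectation on $H_{\mathrm{out}}$ because perfect completeness forces the final output qubit to be $\ket{1}$ with probability $1$. All other terms vanish on $\ket{\eta}$ by the usual Kitaev calculation, so the instance is frustration-free iff a perfectly-accepting witness exists; conversely, a standard Feynman-path argument shows every zero-energy state of the total Hamiltonian is a history state of a valid accepting run, giving a witness.

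The main obstacle, and where most of the work lies, is reducing the locality from the naive $4$- or $5$-local construction to $3$-local while maintaining frustration-freeness. The issue is that each propagation term naturally acts on two adjacent clock registers plus the (up to two) data qubits touched by $U_t$. I would address this by choosing a unary/domain-wall clock encoding in which each propagation term needs only to read one clock qubit (using the gadget trick of inserting a trivial ``clock advance'' step between consecutive nontrivial gates), and by pre-compiling $U$ into a gate set whose two-qubit entangling gates act on a single data qubit plus a dedicated ``shared'' qubit, so that the total locality per propagation projector is at most $3$. Verifying that this reshuffling preserves frustration-freeness of the history state, and checking that the zero-energy ground space still consists only of valid history states (so soundness transfers through the gadgets), is the delicate calculation that dominates the argument.
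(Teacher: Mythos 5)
The structure of your sketch (containment via random-clause measurement; hardness via a circuit-to-Hamiltonian reduction where perfect completeness of the \cls{QMA_1} verifier makes the history state exactly frustration-free) is the right high-level outline, and it matches the framing the cited paper uses. But the place you yourself flag as ``the delicate calculation that dominates the argument'' --- getting from $5$-local propagation terms down to $3$-local while keeping the zero-energy subspace exactly the valid history states --- is genuinely the hard content of Gosset--Nagaj, and the two gadget ideas you offer do not resolve it.

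Concretely: a domain-wall (unary) clock state $\ket{1^t 0^{T-t}}$ requires the propagation projector to inspect at least two or three adjacent clock qubits to detect the wall and rule out illegal configurations such as $\ket{\ldots 010\ldots}$; a transition term that reads only one clock qubit cannot distinguish ``time $t$'' from ``time $t'$'' and will leave illegal clock states in the kernel, destroying soundness. Inserting trivial clock-advance steps between gates does not help, because the nontrivial gate steps still need $\geq 2$ clock qubits plus the $2$ data qubits they act on. The ``dedicated shared qubit'' idea also does not reduce locality: a two-qubit gate still touches two data qubits, and routing everything through one fixed qubit forces you to interleave SWAP-like steps, each of which is itself a two-qubit gate and reintroduces the same problem. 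What Gosset and Nagaj actually do is build a custom clock register (with several clock qubits per gate and a carefully engineered set of $3$-local ``clock validity'' and ``transition'' projectors) together with a specific finite gate set $\{\hat H, T, \mathrm{CNOT}\}$ chosen so that all amplitudes live in $\Ints[i,1/\sqrt{2}]$; the nontrivial work is proving that the joint kernel of all these $3$-local projectors is exactly the span of legal history states, which requires a lengthy spectral/geometric analysis of the clock Hamiltonian and a perturbation-free argument for soundness. Your sketch would need to be replaced by a construction and a kernel-characterization lemma of that kind. A second omission: because \cls{QMA_1} is gate-set sensitive (the paper makes exactly this point for \cls{BQP_1}), the hardness claim only makes sense once the gate set is fixed; you should state the gate set and verify that your propagation projectors have exactly-representable matrix elements, otherwise ``frustration-free'' is not even well-posed. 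The containment half of your argument is fine, modulo the standard caveat that \cls{QMA_1} amplification must be done in a completeness-preserving way (parallel repetition on a tensor-power witness works and is what is used).
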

\begin{definition}[$(r,s)$-\cls{QSAT}]
The QCSP $(r,s)$-\cls{QSAT} is most naturally described as having a mixture of $r$-qudits and $s$-qudits, with 2-local clauses only between $r$- and $s$-qudits. This can be defined in our definition of QCSP, using $d = r+s$, where allowed clauses are 2-local projectors that project onto the first $r$ states in the first qudit, and the last $s$ states in the second qudit. 
\end{definition}
Note that $(2,2)$-\cls{QSAT} is the same as 2-\cls{QSAT}, and that $(r',s')$-\cls{QSAT} is at least as hard $(r,s)$-\cls{QSAT} if $r' \ge r,\, s'\ge s$.
\begin{theorem}[\cite{Elda08}]
(3,5)-\cls{QSAT} is \cls{QMA_1} complete.
\end{theorem}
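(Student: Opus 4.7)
The plan is to prove the two directions separately: containment in \cls{QMA_1} and \cls{QMA_1}-hardness.

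Containment is the easy direction and follows the standard template used for $k$-\cls{QSAT}. Given a (3,5)-\cls{QSAT} instance with $m$ clauses and gap promise $b \ge 1/\mathrm{poly}(n)$, Arthur's verifier receives the purported ground state $\ket{\psi}$ from Merlin, picks a uniformly random clause $H_i$, and measures the projector $H_i$, accepting iff the outcome is $0$. On YES instances the frustration-free ground state is annihilated by every $H_i$, so Arthur accepts with probability exactly $1$; on NO instances the expected energy is at least $b/m$ for any $\ket{\psi}$, so Arthur rejects with probability at least $b/m$. Standard one-sided amplification by sequential repetition preserves perfect completeness and drives soundness below $1/3$, placing the problem in \cls{QMA_1}.

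Hardness is the substantive direction and is proven by a Kitaev-style circuit-to-Hamiltonian reduction from 3-\cls{QSAT} (or directly from a \cls{QMA_1} verifier circuit over a gate set with algebraic entries, so that perfect completeness is maintained). The target is to realize an initialization term, a propagation term $H_{\text{prop},t}$ for each gate $U_t$, and an output-measurement term, using only 2-local projectors that act on one 3-qudit and one 5-qudit. The construction I would follow (due to Eldar--Regev) is to lay out the computation on a line-like interaction graph where 3-dim particles carry the data qubit in two of their three levels (the third being an ``unused'' tag) while 5-dim particles carry the clock/pointer information: the five levels encode enough states to distinguish ``before the gate,'' ``at the gate,'' ``after the gate,'' plus synchronization states needed for 2-locality. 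Each propagation projector $H_{\text{prop},t}$ is then a rank-one projector on the $15$-dimensional space $\Comp^3 \otimes \Comp^5$ whose kernel contains the two-term superposition $\tfrac{1}{\sqrt{2}}(\ket{t}\ket{\psi_t} - \ket{t{+}1}U_t\ket{\psi_t})$ required by a history state. Initialization projectors fix the data qudit into $\ket{0}$ when the adjacent clock qudit is in the ``start'' level, and the output projector penalizes the rejecting branch when the clock qudit is in the ``final'' level.

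The main obstacle is the combination of three constraints: (i) the locality of the interaction is not just $2$-local but must act between one qudit of dimension $3$ and one of dimension $5$; (ii) each clause must be a genuine projector; and (iii) completeness must be \emph{perfect}, so the unique zero-energy ground state must be (up to valid clock-configuration encodings) the exact history state $\sum_t \ket{t}\ket{\psi_t}$. Meeting (i) forces the clock/pointer trick described above; meeting (iii) forces a gate set whose entries lie in an algebraic extension where all the propagation projectors simultaneously kill the history state exactly rather than approximately. With the right layout, soundness is the routine Kitaev argument: restrict to the legal clock subspace using a Gauss-law-style projector (whose positive-energy illegal states are penalized by 2-local pointer-consistency projectors), and apply the Kitaev geometric lemma / Nullspace Projection Lemma to lower-bound the total energy by $1/\mathrm{poly}(n)$ on any state whose data branch contains no accepting witness. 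Combining the reduction with the containment above gives \cls{QMA_1}-completeness.
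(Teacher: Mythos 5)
This theorem is cited from Eldar and Regev \cite{Elda08}; the present paper does not contain a proof of it, so there is no internal argument to compare against. Your containment direction is standard and correct. Your hardness sketch follows the right overall template (Kitaev history state, clock distributed across the higher-dimensional particles, geometric lemma for soundness), and it is faithful to the general plan of \cite{Elda08} at the level of slogans, but it omits the step that makes the theorem nontrivial: the explicit 2-local clock gadget by which qutrit-cinquit pairs implement a globally consistent unary clock and gate propagation using only 2-body projectors. Without writing down what the five clock levels are, how adjacent pairs certify a legal clock pattern, and how the propagation and penalty terms are expressed purely in terms of local clock transitions, the proposal is a summary of the shape of the proof rather than a proof.

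Two smaller but genuine slips. First, a rank-one projector on $\Comp^3\otimes\Comp^5$ has a 14-dimensional kernel, so ``whose kernel contains the two-term superposition'' does not pin the propagator down; and in any case a 2-local clause cannot be described in terms of the global history term $\ket{t}\ket{\psi_t}$, since the clause sees only one data qudit and one clock qudit — it must be specified by a local clock-pattern transition, which is precisely the gadgetry you leave unspecified. Second, you correctly flag that perfect completeness requires a gate set with algebraic entries so the history state is annihilated exactly, but you do not resolve it; this is the same delicacy the present paper wrestles with for \cls{BQP_1}, and it cannot simply be waved at. The outline you trace is the right one, but the load-bearing steps — the clock gadget and the energy lower bound on the illegal-clock subspace — are named rather than carried out.
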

It is worth also noting that the analogous (2,3)-\cls{SAT} is \cls{NP}-complete, but it is unknown if (2,3)-\cls{QSAT} is \cls{QMA_1} complete or not. Some 2-local qutrit Hamiltonians such as the AKLT Hamiltonian\cite{Affl87} exhibit interesting entanglement structure in a frustration free system.
\begin{definition}[$k$-\cls{STOQ}-QSAT \cite{Brav09}]
The QCSP $k$-\cls{STOQ}-\cls{QSAT} is the version of $k$-\cls{QSAT} restricted so that all clauses have nonpositive off-diagonal elements (they are ``stoquastic'').
\end{definition}
\begin{theorem}[\cite{Brav09}]
For any $k \ge 6$, the problem $k$-\cls{STOQ}-\cls{QSAT} is \cls{MA}-complete. For $k < 6$, it is contained in \cls{MA}.
\end{theorem}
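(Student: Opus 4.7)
My plan has two parts; the upper bound actually holds for all $k$ and subsumes the second sentence of the theorem, while the hardness requires a genuine construction.

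\emph{Containment in $\cls{MA}$ (valid for every $k$).} For any stoquastic Hamiltonian $H = \sum_i H_i$, the Perron--Frobenius theorem applied to $cI - H$ for sufficiently large $c$ gives a ground state with nonnegative real amplitudes in the computational basis, supported on a connected component of the graph whose edges are the nonzero off-diagonal entries of $-H$. The $\cls{MA}$ protocol has Merlin send Arthur a single basis state $\ket{x}$ claimed to lie in the support of a zero-energy ground state. Arthur then runs a Metropolis-style random walk on that component, using the stoquastic off-diagonal transitions of $-H$; the promised spectral gap $b > 1/\mathrm{poly}(n)$ ensures polynomial mixing to the Perron distribution. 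From the samples he estimates $\bra{\psi}H_i\ket{\psi}$ for every local projector. In a YES instance with a valid $\ket{x}$ each estimate is exactly zero, while in a NO instance some $H_i$ must register a noticeable expectation on any nonnegative ground-candidate by the energy-gap promise, so Arthur rejects with high probability.

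\emph{$\cls{MA}$-hardness at $k \ge 6$.} Fix $L \in \cls{MA}$ that, by a standard amplification, has perfect completeness and inverse-polynomial soundness. Pick a verifier $V$ expressed using only Toffoli gates, with randomness supplied by ancillary qubits initially prepared in $\ket{+}$. Those ancillas can be pinned down by the one-local stoquastic projector $\ket{-}\bra{-}$, whose off-diagonal entries are $-1/2$. On a unary clock register together with the computation qubits, build a Feynman--Kitaev history-state Hamiltonian
\[
H = H_{\text{in}} + H_{\text{clock}} + \sum_t H_{\text{prop},t} + H_{\text{out}},
\]
with propagation terms
\[
H_{\text{prop},t} = \tfrac{1}{2}\bigl(\ket{t}\bra{t} + \ket{t+1}\bra{t+1} - \ket{t+1}\bra{t}\otimes U_t - \ket{t}\bra{t+1}\otimes U_t^\dagger\bigr).
\]
Because Toffoli (and the identity) are nonnegative permutation matrices, each $H_{\text{prop},t}$ has nonpositive off-diagonal entries; the remaining pieces are diagonal and hence automatically stoquastic. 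A unary clock needs three adjacent clock qubits to witness the transition $t \to t{+}1$, and combining this with the three data qubits of a Toffoli yields six-local interactions. The standard Feynman--Kitaev argument then shows the Hamiltonian is frustration-free iff there is a witness making $V$ accept on every random string, reducing $L$ to $6$-$\cls{STOQ}$-$\cls{QSAT}$.

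The main obstacle is the hardness direction, which must simultaneously achieve $\cls{MA}$-universality of the encoded gate set, stoquasticity of every term, and perfect completeness. Stoquasticity forbids Hadamards and generic phase gates, so all randomness has to enter through $\ket{+}$-ancilla preparations; perfect completeness then requires a tailored amplification that never invokes sign-problematic unitaries on the witness. The $6$-locality barrier arises from the combined cost of the unary clock (three clock qubits per propagation term) and the three-qubit Toffoli; pushing $k$ below $6$ would demand either a more compact clock encoding or a randomness gadget with a smaller footprint than a free $\ket{+}$ ancilla plus a Toffoli, which is why the $k<6$ regime is left only as the upper bound.
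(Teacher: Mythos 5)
The paper cites this theorem as a result of \cite{Brav09} without reproving it, so the comparison is with the argument in that reference rather than with a proof appearing in the present paper.

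Your hardness construction is in line with \cite{Brav09}: converting an \cls{MA} verifier into a quantum circuit built from Toffoli gates with $\ket{0}$ and $\ket{+}$ ancillas (the paper itself quotes exactly this conversion later, in the \cls{coRP} section), feeding that circuit into a Feynman--Kitaev history-state Hamiltonian whose propagation terms stay stoquastic because Toffoli is a permutation matrix, and arriving at $6$-locality from three unary-clock qubits plus the three data qubits of a Toffoli. That part of the proposal is sound.

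The containment direction has a genuine gap. You claim Arthur runs a Metropolis-style random walk on the connected component of $\ket{x}$ and that ``the promised spectral gap $b > 1/\mathrm{poly}(n)$ ensures polynomial mixing to the Perron distribution,'' from which he would then estimate each $\bra{\psi}H_i\ket{\psi}$. Neither half of this is justified. First, the promise gives a $1/\mathrm{poly}(n)$ lower bound on the ground-state \emph{energy} in NO instances; it does not bound the spectral gap of any classical Markov chain built from the off-diagonal structure of $-H$, and such chains can have exponentially long mixing times even when the Hamiltonian gap is large. Second, to run an actual Metropolis chain whose stationary law is $\braket{x|\psi}^2$ (or the Perron weights), Arthur would need to evaluate amplitude ratios $\braket{y|\psi}/\braket{x|\psi}$, which he cannot do. The verification in \cite{Brav09} deliberately avoids any global mixing requirement: Arthur performs a random walk driven by the local nonnegative matrices $G_a = I - \Pi_a$ and checks \emph{local consistency} at each step---whether the current basis state is annihilated locally, i.e.\ has a positive diagonal entry under $G_a$, and whether transitions stay inside the support of a candidate ground state. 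In a YES instance the walk is trapped forever inside the support of a nonnegative ground state, so Arthur never rejects; in a NO instance the walk hits a locally violated constraint with noticeable probability within polynomially many steps. No claim about convergence to a stationary distribution is needed or made. As written, your protocol would fail on instances with slow-mixing chains, so the containment half of the proposal needs to be replaced by the local-consistency argument.
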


\subsection{Quantum Optimization Problems}
As an aside, it is worth pointing out the difference between quantum constraint problems and quantum {\em optimization} problems. A quantum optimization problem has a similar form: an allowed set of local operators, and we ask if there is a a quantum state with sufficiently low energy. The two key differences are that (1) the local operators are not necessarily projectors, and (2) we ask if the ground state has energy below some $a$, as opposed to being in the ground state of all clauses simultaneously. The Hamiltonians constructed in satisfiable instances are not necessarily frustration free, then.
\par The complexity classes of these problems were classified\footnote{Technically, this result was shown only for problems on qubits; and then, only for when the clauses are all 2-local, or all Pauli matrices are allowed clauses. It seems reasonable to expect they generalize.} by Cubitt and Montanaro in \cite{Cubi16}, where it was shown they fall into \cls{P}, \cls{NP}-complete, \cls{StoqMA}-complete, and \cls{QMA}-complete. Indeed, it is known the optimization problem 2-\cls{Local}-\cls{Hamiltonian} is already \cls{QMA}-complete\cite{Kemp06}, analogous to \cls{MAX}-2-\cls{SAT} already being \cls{NP}-complete. Cubitt and Montanaro's classification is analogous to the ``Min CSP classification theorem'' on classical optimization problems, an optimization-oriented analog of the dichotomy theorem.  We will not further discuss optimization problems here.

\section{Statement of results}
It is known that some CSPs are not simply in \cls{P}, but in fact complete for \cls{P}: they model all efficient classical computation. It seems natural to ask whether there are QCSPs that capture efficient quantum computation, viewed as \cls{BQP}. The class \cls{BQP} consists of the problems for which there exists uniform quantum circuits of polynomial size that return the correct answer with probability at least $2/3$. However, we are discussing exact satisfying assignments of a Hamiltonian, so allowing a solution that may be wrong $1/3$ of the time would prove very difficult, and we must restrict ourselves to one-sided error. This is the same reason that \cite{Goss13} uses \cls{QMA_1} over \cls{QMA}. We define the class as following:
\begin{definition}[\cls{BQP_1}]
A language $L$ belongs to the class \cls{BQP_1} iff there is a uniform family of quantum circuits $U$ of polynomial size, such that for an input $x$:

\noindent \textbf{Perfect Completeness}: If $x\in L$, then measuring the first qubit of $U\ket{x}$ gives 1 with probability 1.

\noindent \textbf{Soundness}: If $x\not\in L$, then measuring the first qubit of $U\ket{x}$ gives 0 with probability at least $2/3$.
\end{definition}

It is worth noting that \cls{BQP_1} could also be called \cls{coRQP}, the set of complements to \cls{RQP}: quantumly solvable problems with perfect soundness and bounded-error completeness\cite{Bert94}\cite{Bera18}. Our first main result is

\begin{theorem}
\label{thm:BQP1}
There is a fixed set $\mathcal{C}$ of 5-local projectors on 13-dimensional qudits, such that the QCSP for $\mathcal{C}$ is complete for \cls{BQP_1}.
\end{theorem}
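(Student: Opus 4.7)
The plan is to exhibit an explicit 5-local constraint language on 13-dimensional qudits, which I will call \QCTS (matching the ``Quantum-Clock-Ternary-SAT'' notation already reserved in the preamble), and prove both (a) containment in \cls{BQP_1} and (b) hardness for every \cls{BQP_1} language. The natural template is the Feynman--Kitaev history-state encoding of a quantum circuit, adapted to the witnessless, frustration-free setting.

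First I would structure each 13-dimensional qudit as a ``ternary'' computational sector of dimension 3 together with a clock/control sector of dimension 10, used to mark a qudit's location in circuit space-time. The clause set $\mathcal{C}$ would consist of: (i) standard Feynman propagation projectors that enforce a one- or two-qubit gate at each clock tick, written in the usual form $\tfrac{1}{2}(\ket{t}\bra{t}+\ket{t+1}\bra{t+1} - U_g\otimes\ket{t+1}\bra{t} - U_g^\dagger\otimes\ket{t}\bra{t+1})$; (ii) clock-legality projectors that rule out invalid clock tags; (iii) boundary projectors pinning the computational register to a fixed classical value at clock time~$0$; and (iv) a single acceptance projector forcing the designated output qudit into $\ket{1}$ at the final clock time. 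The choices of $5$ and $13$ are driven by the requirement that a two-qubit gate together with its adjacent clock tags fit into a single local projector without exceeding either bound.

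For \cls{BQP_1}-hardness, given a language $L\in\cls{BQP_1}$ with associated circuit family $U = U_T\cdots U_1$ and input $x$, I would compile $(U,x)$ into a \QCTS instance via the translation above, with the boundary projectors enforcing the $t=0$ slice of the computational register to equal $\ket{x}\ket{0\cdots 0}$. Because the \cls{BQP_1} verifier uses no witness, every computational qudit is constrained at time~$0$. If $x\in L$, the Feynman history state is frustration free by direct inspection. If $x\notin L$, Kitaev's standard Projection Lemma argument yields a $1/\mathrm{poly}(n)$ lower bound on the ground-state energy, matching the required promise gap $b$.

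The main obstacle is the \cls{BQP_1} \emph{upper} bound, since a generic frustration-free Hamiltonian problem sits in \cls{QMA_1} rather than \cls{BQP_1}. I need the structure of \QCTS to guarantee that, whenever the instance is satisfiable, the ground space is supported on history states whose $t=0$ slice is a \emph{classical} configuration fully determined by the boundary projectors, leaving no slack for a quantum witness. Once that rigidity is established, a \cls{BQP_1} verifier simply reads off the classical initial state and the gate sequence from the instance, simulates the encoded circuit, and measures the designated output qubit; perfect completeness is automatic, and the inverse-polynomial promise gap amplifies via repetition to soundness error at most $1/3$. Engineering the constraint language so that boundary and clock clauses together pin down a unique classical input (so a prover cannot smuggle in a witness that would inflate the complexity to \cls{QMA_1}), while still remaining 5-local on 13-dimensional qudits, is the delicate technical core of the construction.
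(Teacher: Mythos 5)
Your high-level template (Feynman--Kitaev history state on ternary-plus-clock qudits, compiled from a \cls{BQP_1} circuit) matches the paper's, and your hardness sketch is essentially right. But the part you correctly flag as ``the delicate technical core'' --- why a satisfiable instance never requires a nontrivial quantum witness, so that the problem stays in \cls{BQP_1} rather than drifting up to \cls{QMA_1} --- is the part you leave as a black box, and your stated plan for it (``pin down a unique classical input'' via boundary and clock clauses) is not how the problem is actually closed and would not close it. The instance is adversarial: nothing forces an $H_{\mathrm{Start}}$-type clause onto every logical qudit, so you cannot make the $t=0$ slice rigidly classical. If you try to force rigidity syntactically, you no longer have a fixed finite constraint language; if you don't, an unconstrained input leg is exactly the ``guess the input'' freedom that makes Kitaev-style Hamiltonian problems \cls{QMA_1}-hard.

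The paper resolves this with two ideas your proposal does not mention. First, the third logical state is not a spare basis vector but an active ``undefined'' state $\ket{U_L}$ in the dual-rail / Horn-\cls{SAT} sense: the propagation clause is built so that if a gate's input is $\ket{U_L}$, the amplitude-matching constraint is dropped and the history state is allowed to terminate early with zero energy. Thus any logical qudit with no start clause can be assumed undefined, and its downstream gates are automatically satisfied --- unconstrained inputs cost nothing rather than granting witness power. Second, monogamy of entanglement (Bell-pair clauses between $CA$/$CB$ subspaces and endpoint qudits) is used to force the clock qudits into linear chains with unique predecessors, successors, and endpoints, so that a classical preprocessor can decompose an arbitrary instance into disjoint single-clock circuits (and a lemma rules out two clock chains sharing a logical qubit in a frustration-free ground state). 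Without both mechanisms you have no \cls{BQP_1} algorithm: you cannot recover the circuit from the instance, and you cannot argue that satisfiability reduces to running it from a known classical start state. So the proposal, as written, has the right skeleton but is missing the two load-bearing ideas for containment.
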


The majority of the work shall be in proving this theorem. In the process of constructing these projectors, it will become apparent that two more complexity classes could be handled as well, \cls{QCMA_1} and \cls{coRP}.

\begin{definition}[\cls{QCMA_1}]
A language $L$ belongs to the class \cls{QCMA_1} iff there is a uniform family of quantum circuits $U$ of polynomial size, and a polynomial $p(n)$, such that for an input $x$:

\noindent \textbf{Perfect Completeness}: If $x\in L$, there is a classical bitstring $y \in \{0,1\}^{p(|x|)}$, such that measuring the first qubit of $U\ket{x}\ket{y}$ gives 1 with probability 1.

\noindent \textbf{Soundness}: If $x\not\in L$, then for all $y \in \{0,1\}^{p(|x|)}$, measuring the first qubit of $U\ket{x}\ket{y}$ gives 0 with probability at least $2/3$.

\end{definition}

\begin{definition}[\cls{coRP}]
A language $L$ belongs to the class \cls{coRP} iff there is a random Turing machine $T$, such that $T$ always runs in polynomial time, and for any input $x$:

\noindent \textbf{Perfect Completeness}: If $x\in L$, $T$ accepts $x$ with probability 1.

\noindent \textbf{Soundness}: If $x\not\in L$, $T$ rejects $x$ with probability $\ge 2/3$.

\end{definition}

The class \cls{QCMA_1} is a one-sided error variant of \cls{QCMA}\cite{AH02}. In\cite{qcma1qcma} it was shown that \cls{QCMA_1=QCMA}, so our \cls{QCMA_1}-complete problem is equivalently \cls{QCMA}-complete. \cls{coRP}, and its more common complement class \cls{RP}, are standard\cite{randomized}. They are the one-sided error versions of \cls{BPP}. Our other two main results are,

\begin{theorem}
\label{thm:QCMA1}
There is a fixed set $\mathcal{C}$ of 5-local projectors on 15-dimensional qudits, such that the QCSP for $\mathcal{C}$ is complete for \cls{QCMA}.
\end{theorem}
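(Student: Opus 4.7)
The plan is to adapt the \cls{BQP_1} construction from Theorem \ref{thm:BQP1} to incorporate a classical witness, in analogy with how \cls{QCMA} extends \cls{BQP} at the level of complexity classes. I would start from the same clock/history construction, which encodes a polynomial-size quantum circuit as a frustration-free Hamiltonian whose unique zero-energy state is the Kitaev history state, with clauses fixing the input qudits and projecting on acceptance at the final clock step. The modification is to designate an additional witness register, with no initialization clauses, and to use the two extra qudit levels (going from $13$ to $15$) to mark witness qudits and to support the dephasing gadget described below.

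A priori, leaving the witness register free would make the QCSP correspond to \cls{QMA_1}, not \cls{QCMA_1}: a quantum superposition witness could in principle satisfy the acceptance projector even when no individual classical witness does. To rule this out, I would insert a dephasing gadget as the first step of the simulated circuit $V$: each witness qubit is CNOT'd into a fresh $\ket{0}$-initialized ancilla. The rest of $V$ then acts on the computational register without touching these ancillas, so a general witness $\sum_y \alpha_y \ket{y}$ evolves to $\sum_y \alpha_y\, V\ket{y}\,\ket{y}_{\text{anc}}$. Because the $\ket{y}_{\text{anc}}$ are mutually orthogonal, the output projector annihilates this state only if $V\ket{y}$ has output qubit $\ket{1}$ for every $y$ in the support, i.e.\ only if every classical $y$ in the support is individually accepted with probability $1$. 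Hence the QCSP is satisfiable iff there exists a single classical accepting witness --- exactly the \cls{QCMA_1} condition.

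For \cls{QCMA}-containment the classical witness is simply the string $y$; the verifier reads off the encoded circuit from the QCSP instance and simulates it on $(x, y)$, which runs in polynomial time since the clauses come from a fixed finite set. For \cls{QCMA}-hardness, any \cls{QCMA_1} verifier (invoking $\cls{QCMA_1} = \cls{QCMA}$ from \cite{qcma1qcma}) is compiled into the QCSP via the same history-state reduction used in Theorem \ref{thm:BQP1}, treating the witness register as free input qudits subject to the prepended dephasing gadget.

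The main obstacle I expect is realizing the dephasing gadget as $5$-local, $15$-dimensional projectors while keeping the total Hamiltonian frustration-free and preserving an inverse-polynomial promise gap. This likely requires repurposing the two new qudit levels as ``witness-flag'' and ``copy-complete'' markers that are coordinated with the clock register, in the same spirit as the clock-state manipulations used to enforce propagation in Kitaev-style constructions; a careful spectral analysis, probably via a geometric-lemma or Nullspace-projection argument, would then be needed to show that the gap of the modified Hamiltonian is still bounded below by $1/\text{poly}(n)$ in the no-instance case.
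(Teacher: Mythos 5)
Your conceptual picture --- dephase the witness by copying it into a register the rest of the circuit never touches, forcing the effective input to be a classical mixture --- is exactly the intuition the paper relies on; in fact the paper explicitly describes the CNOT-based copy as ``an equivalent setup.'' But the paper also flags that version as ``not easily built as a QCSP,'' and for good reason: in a constraint problem the circuit gates are \emph{data in the instance}, not structural guarantees of the problem family. Nothing forces an adversarial instance to include your dephasing CNOTs, nor prevents it from applying later propagation clauses to the ancilla register, and the containment proof must decide \emph{arbitrary} instances, not only the ones that come out of your reduction. You correctly flag the gadget as the main obstacle, but the clock-coordinated ``copy-complete'' markers you sketch would require substantially more machinery than the paper uses, plus exactly the kind of fresh gap analysis you were hoping to avoid.

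The paper's actual construction sidesteps all of this. Instead of a CNOT at $t=1$, it introduces two new basis states $\ket{0_P}, \ket{1_P}$ living in a commitment subspace that \emph{no} propagation clause can reach: the $H_{prop,U}$ clauses only ever act through $P_D = \ket{0_L}\bra{0_L}+\ket{1_L}\bra{1_L}$, so the commitment register is untouchable by construction. A single $4$-local initialization clause $H_{Start-Unk}$ then, at clock time $t=0$, simultaneously restricts the witness qubit to the span of $\ket{0_L}, \ket{1_L}$ and imposes the projector $I - \ket{0_L 0_P}\bra{0_L 0_P} - \ket{1_L 1_P}\bra{1_L 1_P}$ between witness and commitment qubits. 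Tracing out the untouchable commitment register collapses the input to a classical mixture, and on each classical branch the remaining Hamiltonian is literally the one from the \cls{BQP_1} proof, so no new spectral argument is needed at all. Your containment sketch (``the verifier reads off the encoded circuit'') also understates the work: the verifier must handle arbitrary malformed instances, so the whole clock-path preprocessing of Algorithm 1 must be rerun, together with one new wrinkle --- a commitment bit that is shared between several circuit chunks --- which the paper resolves by noting it just forces an equality constraint between the corresponding bits of the classical proof string.
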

\begin{theorem}
\label{thm:coRP}
There is a fixed set $\mathcal{C}$ of 5-local projectors on 15-dimensional qudits, such that the QCSP for $\mathcal{C}$ is complete for \cls{coRP}.
\end{theorem}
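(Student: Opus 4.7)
The plan is to reuse the clock-Hamiltonian construction developed for Theorem~\ref{thm:BQP1} and shrink the allowed gate library to exactly the classical reversible gates (e.g.\ Toffoli) together with Hadamards acting on freshly-initialized ancilla qubits, which serve as coin-flip injection points. A \cls{coRP} machine is precisely a deterministic reversible polynomial-time computation with access to a uniformly random auxiliary tape, so this restricted gate library should capture classical randomized polynomial-time with perfect completeness exactly. For \cls{coRP}-hardness, given any \cls{coRP} verifier $M$, I compile $M$ into a reversible classical circuit punctuated by single-qubit Hadamards on fresh ancillas and plug it into the Feynman--Kitaev clock construction of Theorem~\ref{thm:BQP1}; the resulting QCSP instance is then frustration-free iff every branch of $M$ accepts, which is the defining condition for YES instances.

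For containment in \cls{coRP}, the argument is that in this restricted family any frustration-free state must have (up to normalization) the form
\[
\sum_t \ket{t}_{\mathrm{clock}} \otimes \frac{1}{\sqrt{2^{k(t)}}} \sum_r (U_t \cdots U_1)\ket{\psi_0, r}_{\mathrm{comp}},
\]
where each $U_i$ is either a classical reversible gate or a Hadamard on a fresh ancilla, and $r$ ranges over coin-flip outcomes used up to time $t$. Satisfiability of the QCSP then reduces to: the randomized classical computation defined by these gates, run from the specified initial configuration, accepts with probability~$1$. A trivial \cls{coRP} algorithm samples $r$ uniformly at random, simulates the reversible circuit in polynomial time, and accepts iff the output qubit reads~$1$. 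For NO instances the promise gap forces at least a $1/\mathrm{poly}(n)$ fraction of branches to reject, so standard amplification with $O(\log n)$ independent parallel runs boosts soundness to $\ge 2/3$ without harming perfect completeness.

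The main obstacle is the containment direction: one has to show that the restricted gate set, together with the initialization and clock clauses, actually rules out quantum-coherent satisfying assignments that no classical randomized simulator could verify. The two additional qudit levels beyond the $13$-dim \cls{BQP_1} construction presumably serve to tag which computational qubits are classical registers versus coin-flip ancillas and to project out coherent non-classical propagations at the gate-application clauses. Verifying that these extra tags still fit within the 5-local, 15-dim budget, and that the resulting ground space decomposes cleanly into mixtures of reversible classical trajectories, is where the careful structural bookkeeping sits; once that is in hand, both directions of \cls{coRP}-completeness follow by the same clock-Hamiltonian accounting already set up for Theorem~\ref{thm:BQP1}.
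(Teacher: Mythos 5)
Your overall architecture (Feynman--Kitaev clock, classical reversible gate set, randomness injected somewhere) matches the paper, but the specific randomness mechanism you propose has a gap that the paper deliberately avoids. You want to keep a Hadamard in the allowed gate library, restricted to ``freshly-initialized ancilla qubits.'' The trouble is that a QCSP is just a fixed bag of local projectors; once a clause $H_{\mathrm{prop},\hat{H}}$ is in the allowed set, an adversarial \emph{instance} may place that clause anywhere in the clock chain, not merely on fresh ancillas. There is no way for the projector structure to enforce the semantic condition ``this Hadamard only hits a $\ket{0}$ register.'' Consequently the containment direction fails: an instance could chain Hadamards to build arbitrary interference, and your candidate \cls{coRP} algorithm (sample coins, simulate classically) would no longer be sound. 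You correctly flag this as ``the main obstacle,'' and you speculate that the two extra qudit levels are there to tag classical-vs-coin qubits and block coherent propagation, but that is not the mechanism the paper uses, and it is unclear how a 1-local tag would prevent a propagation clause from coupling a Hadamard to a non-fresh register.

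The paper's fix, borrowed from Bravyi--Divincenzo--Oliveira--Terhal's \cls{MA}-hardness argument, is to exclude \emph{all} quantum gates from the gate library: only classical reversible 3-bit gates (Toffoli) appear as $H_{\mathrm{prop},U}$ clauses. The randomness enters entirely through the initialization: a new clause $H_{\mathrm{Start\textrm{-}Rand}}$, identical to $H_{\mathrm{Start}}$ except that it pins the logical qubit at $t=0$ to $\ket{+}$ rather than $\ket{0}$. Because no propagation clause can create superposition, every frustration-free history state is (after tracing out the clock) a classical mixture over coin strings drawn from the $\ket{+}$ inputs, propagated by a deterministic reversible circuit. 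A \cls{coRP} machine can sample those coin strings and simulate. Your proof would go through once you replace ``Hadamard on a fresh ancilla as a gate'' with ``$\ket{+}$ as an allowed initialization clause''; the rest of your hardness and containment reasoning is essentially the paper's.
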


This seems to be the first discussion of \cls{BQP_1}, and its difficulty is likely similar to that of \cls{BQP}. In the context of \cls{coRP}, the distinction between promise problems and decision problems is more important, and so it is more correct to say we have a \cls{Promise\,coRP}-complete problem.

\par Our last results establish that all QCSPs can be described entirely on qubits, something which is not expected to hold in the classical setting. This implies that there are also QCSPs for \cls{BQP_1}, \cls{QCMA}, and \cls{coRP} on qubits alone.

\begin{theorem} (Informal)
\label{thm:q2Informal}
Every QCSP $\mathcal{C}$ on $d$-qudits is equivalent in difficulty to some QCSP $\mathcal{C}'$ on qubits.
\end{theorem}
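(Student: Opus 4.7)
The plan is a direct qubit encoding of each qudit. Represent every $d$-qudit by $b := \lceil \log_2 d \rceil$ qubits in binary, so that qudit basis state $\ket{i}$ corresponds to the qubit product state $\ket{i_{b-1}\cdots i_0}$. Let $V:\mathbb{C}^d \hookrightarrow (\mathbb{C}^2)^{\otimes b}$ denote the single-qudit embedding isometry, and let $\mathcal{V}$ denote the image of $V^{\otimes n}$ on the full system---the \emph{valid subspace}. I would define the target qubit QCSP $\mathcal{C}'$ to contain two families of clauses: (i) a $b$-local \emph{validity projector} $\ket{s}\bra{s}$ for each invalid bit pattern $s \in \{d, \dots, 2^b - 1\}$, and (ii) for each $k$-local clause $\mathcal{H} \in \mathcal{C}$, the $bk$-local embedded projector $H := V^{\otimes k} \mathcal{H} (V^{\otimes k})^\dagger$ (a genuine projector since $V^\dagger V = I$). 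The set $\mathcal{C}'$ is fixed once $\mathcal{C}$ and $d$ are.

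The reduction of a qudit instance on $n$ qudits to a qubit instance on $bn$ qubits would work as follows: replace each qudit clause by its embedded image on the corresponding qubit blocks, and on each qudit's block of $b$ qubits place all validity projectors, repeated with multiplicity $J = \mathrm{poly}(n)$ to be fixed below. Satisfiability transfers cleanly in both directions. In the forward direction, a frustration-free qudit state $\ket{\psi}$ lifts to $V^{\otimes n}\ket{\psi} \in \mathcal{V}$, which is killed by every validity projector and, using $V^\dagger V = I$, by every embedded $H_i$. For the reverse direction, any frustration-free qubit state is killed by every validity projector, must therefore lie in $\mathcal{V}$, and takes the form $V^{\otimes n}\ket{\psi}$; pulling each equation $H_i V^{\otimes n}\ket{\psi} = 0$ back through $V^\dagger$ recovers $\mathcal{H}_i \ket{\psi} = 0$.

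The main obstacle, I expect, is preserving the inverse-polynomial promise gap in the NO case. Here I would invoke the Kempe--Kitaev--Regev projection lemma applied to $H_{\mathrm{total}} = \sum_i H_i + J \sum_q V_q^{\mathrm{tot}}$, where $V_q^{\mathrm{tot}}$ is the sum of validity projectors on qudit $q$'s block. The penalty $\sum_q V_q^{\mathrm{tot}}$ is a sum of commuting single-block projectors whose common kernel is exactly $\mathcal{V}$, and since every computational-basis state outside $\mathcal{V}$ has at least one invalid block, $\sum_q V_q^{\mathrm{tot}} \succeq \Pi_{\mathcal{V}^\perp}$, so the penalty has spectral gap at least $1$ above its zero space. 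Taking $J$ to be a sufficiently large polynomial in $n$---larger than $\|\sum_i H_i\|^2$ times $\mathrm{poly}(n)$---the projection lemma then yields $\lambda_{\min}(H_{\mathrm{total}}) \ge \lambda_{\min}(\Pi_\mathcal{V}(\sum_i H_i)\Pi_\mathcal{V}) - 1/\mathrm{poly}(n)$, and the restricted operator is unitarily equivalent via $V^{\otimes n}$ to $\sum_i \mathcal{H}_i$ on the qudit Hilbert space. A qudit NO instance with gap $\beta \ge 1/\mathrm{poly}(n)$ therefore becomes a qubit NO instance with gap at least $\beta/2$. Combined with the trivial reverse direction (qubits are the $d=2$ special case of qudits), this establishes the claimed equivalence in difficulty.
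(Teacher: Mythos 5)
Your forward reduction from $\mathcal{C}$ to $\mathcal{C}'$ is sound, and using the projection lemma to preserve the promise gap in the NO case is a reasonable (if heavier) substitute for the paper's more elementary argument. But the proposal misses the central difficulty, and your closing sentence conceals the gap. You claim the reverse reduction is ``trivial'' because qubits are the $d=2$ special case of qudits, but that observation is about the ambient model, not about the specific languages. What Theorem~\ref{thm:q2Informal} asserts is that the particular qubit QCSP $\mathcal{C}'$ you construct is reducible \emph{back} to $\mathcal{C}$. Your $\mathcal{C}'$ has many instances that do not arise as images of your reduction: an adversary may apply an embedded clause $V^{\otimes k}\mathcal{H}(V^{\otimes k})^\dagger$ so that it treats a given qubit as bit 1 of one block, while another clause treats the same qubit as bit 2 of a different block, or mixes qubits from disjoint intended blocks. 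Nothing in your construction forces such instances to be unsatisfiable, so the reverse reduction from $\mathcal{C}'$ to $\mathcal{C}$ has no way to handle them. In the worst case these ``misaligned'' instances could encode constraints that were genuinely unrealizable on $d$-qudits, so $\mathcal{C}'$ might be strictly harder than $\mathcal{C}$.

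This is exactly the obstruction the paper identifies at the outset of its proof: a naive binary encoding fails because ``we cannot ensure that the $2k$-local clauses are applied to qubits in a consistent fashion.'' The paper's solution is to weld each block of qubits together by entanglement (a GHZ-like state $T_2$ at the 4-qudit stage, and the ad hoc four-dimensional-nullspace projector $H_{4\to 2}$ at the qubit stage), so that any two clauses grouping the same qubits differently have orthogonal required entanglement structure and are automatically frustrated. This lets the reverse reduction classically detect inconsistent groupings (it's an $\mathsf{AC^0}$ check), map those instances to a fixed unsatisfiable $\mathcal{C}$ instance, and map consistent ones back faithfully. Your validity projectors $\ket{s}\bra{s}$ for unused bit patterns accomplish something similar to the paper's $T_1$, but that only restricts \emph{which} states a block can occupy, not \emph{which qubits form a block}; it has no analogue of the monogamy trick. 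Without that piece, the proof of equivalence does not go through.
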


\section{Construction Techniques}
Before diving into the construction of the \cls{BQP_1}-complete Hamiltonian, we go over a few of the tools.

\subsection{Universal Gate Set}
In their proof that Quantum 3-SAT is \cls{QMA_1}-Complete, Gosset and Nagaj\cite{Goss13} use a gate set $\mathcal{G_0} = \{\hat H,T,CNOT\}$. This has the property that every matrix element is of the form $(a+ib+\sqrt{2}c+i\sqrt{2}d)/4$, see their Definition 3. For reasons that will become clear later, we would like a gate set that leaves no classical basis state unchanged, and so we use the modified gate set
$$\mathcal{G} = \{\hat H,\hat HT,(\hat H\otimes \hat H)CNOT\}$$
Through appropriate multiplications by $\hat H$, this allows the construction of all of $\mathcal{G_0}$, so this is still universal. It is straightforward to compute all matrix elements and check they have the same form.
A language $L$ is \cls{BQP}$_1$-hard if there is a (classical) polynomial time reduction from any $L' \in$ \cls{BQP}$_1$ to $L$. A language is \cls{BQP_1}-complete if it is both \cls{BQP_1}-hard and in \cls{BQP_1} itself. It is immediate that \cls{EQP} $\subseteq$ \cls{BQP_1} $\subseteq$ \cls{BQP}. Given that \cls{EQP} and \cls{BQP} are commonly understood to capture at least some of the power of quantum computing (e.g. strong oracle separations from \cls{P}), \cls{BQP_1} can be understood to be lower- and upper-bounded between these, and so has some essential quantum nature itself.
\par Unfortunately, \cls{BQP_1} is a class with a vanishing error probability (on one-side), the notion of ``universal gate set'' is delicate. The gate that conditionally rotates phase by $\pi/3$, for instance, cannot be built exactly from $\mathcal{G_0}$, although it can be approximated exponentially well. This is a problem faced in Gosset and Nagaj\cite{Goss13} as well.
\par For this reason, \cls{BQP_1} is not a well-defined complexity class on its own; it requires an assumption on the allowed gate set that the circuit $U$ should be built from. If we fix the gate set $\mathcal{G}$ above, we get a class \cls{BQP_{1,\mathcal{G}}}, of problems with one-sided error solvable using $\mathcal{G}$. The QCSP we will construct below is complete for the class \cls{BQP_{1,\mathcal{G}}}. If we preferred a different gate set $\mathcal{G'}$, say be adding $R(\pi/3)$ gate to $\mathcal{G}$, we would have a new class \cls{BQP_{1,\mathcal{G'}}}, and the problem we construct is complete for that class. In this sense, the construction is generic, and we simply write that the QCSP is \cls{BQP_1}-complete. In section 7 we give a definition of {\em weak} QCSPs that attempts to fix these irritating details.

\subsection{Ternary Logic}
We will use ternary logic (also known as ``dual rail logic''), a standard tool in proving \cls{P}-completeness. While 2-\cls{SAT} is a problem that can be solved efficiently on a classical computer, it is not believed to capture the full power of classical computing. But another problem, \cls{Horn}-\cls{SAT}, is \cls{P}-complete: it captures the full power of classical computation. The ideas in that proof will be important in our construction, lifted to a quantum setting, so we briefly outline it here.

\par \cls{Horn}-\cls{SAT} is the boolean CSP which allows any OR clauses in at most 3 variables with at most one negative variable: clauses like $(v_i \vee v_j),\,(\neg v_i \vee v_j \vee v_k),\,(v_i),\,(\neg v_i)$, but not $(\neg v_i \vee \neg v_j)$. By setting all variables to true, all clauses are satisifed except those of form $(\neg v_i)$, which imply $v_i$ must be false. This propagates, possibly reducing some $(v_i \vee \neg v_j) \to (\neg v_j)$, flipping more variables to false until a satisfying instance or conflict is found. Thus, it is in \cls{P}.
 
\par To be \cls{P}-complete, a CSP must be flexible enough to establish an arbitrary computational graph, usually as a circuit. And yet, it must not permit the construction of problems that require guessing the input, such as a circuit with the input left blank and the output forced ``True'', because this would lead to \cls{NP}-hardness. Informally, \cls{Horn}-\cls{SAT} accomplishes P-completness through {\em dual-rail logic}: for each Boolean variable $v_i$ in the original circuit, we make {\em two} variables in the Horn-SAT instance, $v_{i,T}$ and $v_{i,F}$, representing the assertion that $v_i$ is true or false, respectively. We add the constraints that $v_{i,T} \implies \neg v_{i,F}$. This implies that at most one of them can be true -- but does not rule out the case that both are false. This is the third state, ``Undefined'', of the ternary logic. All logic gates of the circuit are then constructed so that they are trivially satisfied, if both $v_{i,T}$ and $v_{i,F}$ are false. A Horn-SAT problem can then constrain the inputs to a circuit by requiring one or the other variable to be true, but if the input is left unconstrained (as in the NP proof-checking setup), then the circuit can be satisfied by just leaving both $v_{i,T}$ and $v_{i,F}$ false.
\par If our variables are 3-state (instead of Boolean), we can construct P-Complete problems more easily. The dual-rail variable is replaced with a single variable, whose states are labelled ``true'', ``false'', and ``undefined''. The output of a gate with ``undefined'' at any input can be anything, but if a gate has two defined inputs, it must compute its output appropriately. This allows the implication of variable states to only travel forward in the computation graph (conducting computation along the way) and not backwards (which would allow the construction of \cls{NP}-hard input-guessing problems).

\subsection{Monogamy}
The monogamy of entanglement\cite{Coff00} states that a subsystem $A$ cannot be fully entangled with subsystem $B$ and with subsystem $C$ at the same time. A 2-qubit clause such as 
$H_{BellPair} = I-(\ket{00}+\ket{11})(\bra{00}+\bra{11})$
has a unique ground state, a Bell pair, which is fully entangled between its two qubits. In a QCSP with $H_{BellPair}$ applied once to qubits 1 and 2, and applied a second time to qubits 2 and 3, we could immediately reject: any satisfying assignment would require qubit 1 to be fully entangled with qubit 2 and qubit 3, violating monogamy. This is a trick we can use to force certain clauses to pair up certain variables, without allowing the creation of any more complicated constraint graphs.

\section{Construction of \cls{BQP_1}-complete Problem}
The problem will be built with 13-dimensional qudits. We will soon define the problem in terms of the allowed clauses, but first give a more convenient notation for the 13-dimensional space. There is a subspace spanned by three states, labelled as
$$\ket{0_L}, \ket{1_L}, \ket{U_L}$$
another two states labelled as
$$\ket{0_{EC}}, \ket{1_{EC}}$$
and an eight dimensional subspace which is the tensor product of three 2-dimensional spaces:
$$\mathbb{C}(\ket{0_{CL}}, \ket{1_{CL}}) \otimes \mathbb{C}(\ket{0_{CA}}, \ket{1_{CA}}) \otimes \mathbb{C}(\ket{0_{CB}}, \ket{1_{CB}})$$
We will abuse notation somewhat and write operators such as $\ket{1_{CL}}\bra{0_{CL}}$, which should be understood as shorthand for $\textbf{0}^5 \oplus (\ket{1_{CL}}\bra{0_{CL}} \otimes I^4)$. Here $\textbf{0}^5$ is a 5-dimensional zero operator, on the the first 5 of the 13 states, and $I^4$ is the 4-dimensional identity operator on the $CA$ and $CB$ subspace. This is the most natural extension of $\ket{1_{CL}}\bra{0_{CL}}$, which is an operator on a two-dimensional Hilbert space, to an operator on the 13-dimensional Hilbert space.
\par We will also write equations such as $X = Y_{12} + Z_{23}$, to mean that $X$ is a 3-local operator, built from the 2-local operators $Y$ and $Z$, which act on the first two and last two qudits respectively. That is, $X = Y\otimes I + I\otimes Z$.
\par With this notation, we can define the problem now, after which we'll elaborate on how it was constructed, which will motivate the definition.

\begin{definition}[\QCTS]
\par The problem \QCTS is the quantum constraint problem with 5 clauses. The allowed clauses are $H_{Start}$, $H_{End}$, and for each unitary $U$ in the gate set $\mathcal{G}$, a clause $H_{prop,U}$. Since $|\mathcal{G}| = 3$ above, this is five clauses total.
\par To give expressions for the clauses, first define the 1-local operators,
$$H_{L} = I - \ket{0_L}\bra{0_L} - \ket{1_L}\bra{1_L} - \ket{U_L}\bra{U_L}$$
$$H_{E} = I - \ket{0_{EC}}\bra{0_{EC}} - \ket{1_{EC}}\bra{1_{EC}}$$
$$H_{C} = \ket{0_L}\bra{0_L} + \ket{1_L}\bra{1_L} + \ket{U_L}\bra{U_L} + \ket{0_{EC}}\bra{0_{EC}} + \ket{1_{EC}}\bra{1_{EC}}$$
$$P_D = \ket{0_L}\bra{0_L}+\ket{1_L}\bra{1_L}$$
and the 2-local operator,
$$H_{BP} = I - (\ket{0_{CB}0_{CA}}+\ket{1_{CB}1_{CA}})(\bra{0_{CB}0_{CA}}+\bra{1_{CB}1_{CA}})$$
and for any 2-qubit unitary $U$, define the 2-local $T(U)$ by the product
$$T(U) = U_B(P_D\otimes P_D)$$
where $U_B$ acts on $\ket{0_L}$ and $\ket{1_L}$ the way that $U$ would act on $\ket{0}$ and $\ket{1}$, and the zero operator if either input is something else. That is, mapping $\ket{0}$ and $\ket{1}$ to $\ket{0_L}$ and $\ket{1_L}$ induces an isometry $\mathbb{C}^2 \to \mathbb{C}^{13}$; this extends naturally to operators $U : (\mathbb{C}^2)^2 \to (\mathbb{C}^2)^2$.
\par \rnew{There are three more clauses to define, but they are difficult to write down simply as projectors. In terms of determining frustration-free ground states, we really only care about the fact that our operators are positive semidefinie, and geometry of their kernels, not the energies of any excited states. Any non-projector operator could be "normalized" by adjusting the energies of all excited states to be 1. For example, a 2-qubit operator that requires $H_C$ on the first qubit and $H_E$ on the second qubit could be written $H_{C,1} + H_{E,2}$, and this is shorthand for the normalized projector $I - (I - H_C)\otimes(I - H_E) = H_{C,1} + H_{E,2} - H_{C,1}H_{E,2}$. The following three definitions use this shorthand.}
\par The clauses are $H_{Start}$ and $H_{End}$ are 3-local and defined by
\begin{align}
\label{hstart}
H_{Start} =\,& (I - (\ket{0_{EC}0_{CA}}+\ket{1_{EC}1_{CA}})(\bra{0_{EC}0_{CA}}+\bra{1_{EC}1_{CA}}))_{12}\\
& + ((I - \ket{1_{CL}}\bra{1_{CL}})\otimes (I - \ket{0_{L}}\bra{0_{L}}))_{23}\nonumber\\
& + H_{E,1} + H_{C,2}+ H_{L,3}\nonumber
\end{align}
\begin{align}
\label{hend}
H_{End} =\,& (I - (\ket{0_{EC}0_{CB}}+\ket{1_{EC}1_{CB}})(\bra{0_{EC}0_{CB}}+\bra{1_{EC}1_{CB}}))_{12}\\
& + ((I - \ket{0_{CL}}\bra{0_{CL}})\otimes (I - \ket{0_{L}}\bra{0_{L}} - \ket{U_L}\bra{U_L}))_{23}\nonumber\\
& + H_{E,1} + H_{C,2} + H_{L,3}\nonumber
\end{align}
\vspace{0.3mm}

Each $H_{prop,U}$ is 5-local, defined by
\begin{align}
\label{hprop}
H_{prop,U} =\,&  P_D\otimes P_D\otimes \ket{1_C0_C0_C}\bra{1_C0_C0_C}+I^{\otimes 2}\otimes\ket{1_C1_C0_C}\bra{1_C1_C0_C} \\
& - T(U)\otimes \ket{1_C1_C0_C}\bra{1_C0_C0_C} - T(U)^\dagger\otimes \ket{1_C0_C0_C}\bra{1_C1_C0_C} \nonumber\\
& + H_{BP,34} + H_{BP,45}\nonumber\\
& + H_{L,1} + H_{L,2} + H_{C,3} + H_{C,4} + H_{C,5}\nonumber
\end{align}

\rnew{(These are not projectors as written, and they should be interpreted as the normalized versions; really, that $H_{prop,U}$ is the unique projector with the same kernel as the operator on the right-hand side.)}

\textbf{End definition.}
\end{definition}

Now it is time to give some meaning to the parts of the problem. We can give better names to the 13 basis states. The first three basis vector span the ``logical subspace'', on which we will do ternary logic: $0_L$ and $1_L$ represent logical 0 and 1 qubit states, and $U_L$ represents an ``undefined'' qubit. The operator $H_L$ just requires that a particular 13-qudit is, in fact, a logical qubit.
\par Since our construction is based on Kitaev's circuit-to-Hamiltonian mapping, we need clock qudits, but we will assign them separate states than the logical qudits. The latter 10 vectors of our 13-dimensional space form the ``clock subspace'', which will perform this role. This allows to avoid worrying whether a $\ket{0}$ is a ``clock zero'' or a ``logic zero''. Compare this with Kitaev's Hamiltonian, where the same $\ket{0}$ state is used for both, but many constraint problems can be built that look nothing like a circuit.
\par Eight of the clock states are a tensor product of three 2-dimensional subspaces: $\{\ket{0_{CL}},\ket{1_{CL}}\}$, $\{\ket{0_{CA}},\ket{1_{CA}}\}$, and $\{\ket{0_{CB}},\ket{1_{CB}}\}$. The $_{CL}$ component is ``logical clock'' states, corresponding to the 0 and 1 clock states in Kitaev's clock construction, and these actually carry the information of the current time is in the circuit's evaluation. The $_{CA}$ and $_{CB}$ components do not carry information on timing, and will be used for something else. The operator $H_C$ requires that a 13-qudit is a clock qudit.
\par Since we will be able to do the same clock-to-Hamiltonian mapping, we will have \cls{BQP_1}-hardness. But in order to keep the difficulty within \cls{BQP_1}, we want to avoid building any problem that looks like something {\em other} than a clock-to-Hamiltonian mapping. In particular, things would be very complicated if the chain of clock states branched, instead of forming a linear path of time. We will use monogamy to uniquely pair each clock qudit -- and therefore, each moment in time -- with a {\em unique} predecessor and successor.
\par The $CA$ and $CB$ components are the auxiliary parts of the clock state, used in the monogamy construction. By establishing a Bell pair between the $CA$ component of a 13-qudit $x$ the $CB$ component of another 13-qudit $y$, we are stating that $y$ is the clock qudit immediately following $x$ in time. Since the $CA$ component of $x$ cannot form another Bell pair by monogamy, $x$ now has a uniquely following moment in time. $H_{BP}$ expresses this constraint between two clock qudits.
\subsection{Initializing and terminating}
\par Unlike $QMA$ problems, we do not want to leave the input up to guessing: the circuit should start in the $\ket{0^n}$ state. For that purpose $H_{Start}$ is designed to force a particular qubit to start in the $\ket{0}$ state. Since the start of time is indicated by the first clock qudit being zero, $H_{Start}$ really only needs to say that: either the clock qudit is one, or the logical qudit is zero.
\par But it would be a headache if it $H_{Start}$ was applied somewhere other than the start of time, effectively forcing logical qubits to be zero in the middle of execution. To avoid this mess, we've added to more ``clock endpoint'' states, $\ket{0_{EC}}$ and $\ket{1_{EC}}$. These can entangled with a $_{CA}$ or $_{CB}$ to form a Bell pair, terminating a chain of clock qudits. For the same reasons of monogamy, a single $_{CA}$ can't be entangled with anything else if it's entangled with a $_{EC}$ qubit.
\par So, $H_{Start}$ can be attached a clock qudit $x$, and require that $x$'s $_{CA}$ subspace is entangled with an $_{EC}$ subspace; this means that $x$ must be the first clock qudit in the chain. Then $H_{Start}$ says that, if $x$ is zero, the logical qubit $y$ must be zero as well. This accomplishes initialization of the input.
\par Breaking down \eqref{hstart}, it is an operator on 3 qudits: an endpoint qudit, a clock qudit, and a logical qudit. The third line forces them to be of these these types. The first line requires that the first two qudits form a maximal Bell pair with their $_{EC}$ and $_{CA}$ states. The second line requires that, when the second (clock) qudit is zero, the third (logical) qudit is also zero.
\par $H_{End}$ does in \eqref{hend} is doing almost the same thing. Instead of pair with $_{CA}$, the endpoint qudit forms a Bell pair with the $_{CB}$ subspace, because we want $H_{End}$ to come at the end of the chain. The second line of \eqref{hend} requires that, when the second (clock) qudit is one, the third (logical) qudit is either $0_L$ or $U_L$ -- zero or undefined. This way, $H_{End}$ states that, at the end of execution, we should not get a ``1'' as a result. If all the input bits are defined, we will get a well-defined ouptut, and $H_{End}$ performs the same role as $H_{out}$ in \cite{Kitaev02} and \cite{Kemp03}.
\subsection{Propagating ternary logic}
\par We are not guaranteed that all of our qudits have an $H_{Start}$ clause on them, and we want to avoid having to guess the input as in QMA problems. In a classical circuit, ternary logic would address this problem as follows: should any of the input bits be uninitialized, all downstream bits can be undefined, leaving the output undefined as well, which is always a satisfying assignment.
\par In our quantum world, where the satisfying state is a superposition of the computing history across all time, there's a simpler solution: just end time itself, and ``destroy the universe'' if we try to compute on an undefined input.
\par In \cite{Kitaev02}, the propagating clause for a two-qubit unitary $U$ was:
$$H_{Kitaev} = I\otimes (\ket{t}\bra{t} + \ket{t-1}\bra{t-1}) - U \otimes \ket{t}\bra{t-1} - U^\dagger \otimes \ket{t-1}\bra{t}$$
$$\textrm{where}\quad \ket{t-1} = \ket{100},\,\, \ket{t} = \ket{110}.$$
For a candidate solution $\ket{\psi}$, it straightforward to check that $\ket{\psi}$ can only be in the nullspace of $H$ if:
$$U\,(I \otimes \bra{t-1})\ket{\psi} = (I \otimes \bra{t})\ket{\psi}$$
that is, the logical state encoded in $\ket{\psi}$ for time $t$ must be equal to $U$ applied to the state at time $t-1$. They must also have the same amplitude. The fact that the amplitudes are equal at each step implies $\ket{\psi}$ must be a uniform superposition across all times.
\par For our $H_{prop,U}$ defined in \eqref{hprop}, we need these ingredients, that for {\em defined} inputs we propagate the state with the same amplitude. If the input is in the undefined state $\ket{U_L}$, we will drop the requirement that the amplitude remains the same. The computation can terminate early, because our solution $\ket{\psi}$ is no longer required to have any component on times after $t-1$.
\par The first line of \eqref{hprop} replaces an $I$ in $H_{Kitaev}$ with a $P_D \otimes P_D$, which projects away $\ket{U_L}$ states. $H_{prop,U}$ has all the same nullspace as $H_{Kitaev}$:
$$\ket{0_L 0_L}\otimes \ket{1_C0_C0_C} + (U\ket{0_L 0_L})\otimes \ket{1_C1_C0_C},$$
$$\ket{0_L 1_L}\otimes \ket{1_C0_C0_C} + (U\ket{0_L 1_L})\otimes \ket{1_C1_C0_C},$$
$$\ket{1_L 0_L}\otimes \ket{1_C0_C0_C} + (U\ket{1_L 0_L})\otimes \ket{1_C1_C0_C},$$
$$\ket{1_L 1_L}\otimes \ket{1_C0_C0_C} + (U\ket{1_L 1_L})\otimes \ket{1_C1_C0_C}$$
with the additional options of:
$$\ket{U_L 0_L}\otimes \ket{1_C0_C0_C},$$
$$\ket{U_L 1_L}\otimes \ket{1_C0_C0_C},$$
$$\ket{0_L U_L}\otimes \ket{1_C0_C0_C},$$
$$\ket{1_L U_L}\otimes \ket{1_C0_C0_C},$$
$$\ket{U_L U_L}\otimes \ket{1_C0_C0_C}$$
The third line of \eqref{hprop} requires that $H_{prop,U}$ be placed on the correct sequence of clock bits -- that if it is placed on, say, the 3rd, 5th, and 12th sites of the clock chain, that will be unsatisfiable because of the contradictory Bell pairs. The fourth line of \eqref{hprop} simply requires the right types of particles at each site.
\par Now that we have motivated the definition, we can proceed to the main result.

\begin{duplicate}[Theorem~\ref{thm:BQP1} (restated)]
\QCTS is \cls{BQP_1}-complete.
\end{duplicate}
\begin{Proof}We need to show provide a \cls{BQP_1} algorithm for deciding instances of \QCTS, and show its completeness and soundness, and then that \QCTS is \cls{BQP_1}-hard.

\proofpart{1}{\QCTS is in \cls{BQP_1}.}
\par First, a brief note about the form of the input. Providing every term of the input Hamiltonian in the standard basis would take up exponential space. We assume that the input is provided as a list of clauses (and the qudits they operate on), or a list of 5-local interactions (which might not be manifestly of the form allowed). It possible to find the set of clauses corresponding to a list of 5-local interactions by solving a system of linear equations in $O(n^5)$ time; that this preprocessing can be performed in polynomial time on a classical machine means we don't have to care about the input form. Henceforth we assume that the input is a list of $(clause,sites)$ data, indicating that clause number $clause$ is acting on the sites $sites$.

\par The following \cls{BQP_1} algorithm, we claim, solves this problem. The algorithm proceeds by identifying the structure of (possibly several) \cls{BQP_1} circuits in the problem, which is a classical operation that can be completed in polynomial time. Then it executes each circuit and verifies that the result passes.

\textbf{Algorithm 1}
\begin{enumerate}
\item For each 13-qudit, check all clauses it occurs in. Each clause will apply one of $H_L$, $H_C$, or $H_E$ to it. A given qudit should only ever have one of these applied to it, otherwise we immediately reject. Label each qudit as a logical, clock, or endpoint qudit, depending on which is applied. Any qudits that have none of these applied, have no clauses applied at all, and so can be ignored for the rest of the problem.
\item For each clock qudit and endpoint qudit, inspect all $H_{BP}$, $H_{Start,BP}$, and $H_{End,BP}$ terms applied to it. These should only ever create bell pairs between the same pairs of underlying qubits, in the $CA$, $CB$, and $EC$ subspaces. If any underlying qubit is paired with multiple others, reject.
\item Every qudit labelled as an endpoint qudit has at least one of $H_{Start}$ or $H_{End}$ applied to it. If any endpoint qudit has {\em both} $H_{Start}$ and $H_{End}$, reject. Otherwise, label it as a ``start qudit'' or ``end qudit'' accordingly, and proceed.
\item The pairs from step 2 induce a linear structure where endpoints are connected to at most one clock qudit, and each clock qudit is connected to at most two qudits on either side. Following these connections, all clock qudits and endpoint qudits can be linked into some collection of paths and cycles.
\item For any cycle from step 4 (which necessarily consists entirely of clock qudits), or any path without any endpoint qudits, assign all clock qudits the $\ket{0_C}$ state, and they can be ignored for the rest of the problem.
\item If any remaining paths have no start qudit, assign all clock qudits in that chain $\ket{1_C}$ and ignore them for the rest of the problem.
\item If any remaining paths have no end qudit, assign all clock qudits in that chain $\ket{0_C}$ and ignore them for the rest of the problem.
\item All remaining paths have at least one start qudit and end qudit. Since a single start or end can't be entangled with multiple others, it must be exactly one start qudit and end qudit -- otherwise it would have been rejected in step 2.
\item At this point we are left with a collection of clock paths, with associated unitaries from each $H_{prop,U}$ acting on logical qubits. In the case of a single clock path, this is a (ternary-logic) quantum circuit, and we need to evaluate it. In the case of multiple clock paths operating on the same qubits, we need to ensure compatibility of the two circuits, which is more complicated. We first describe the case of a single clock path, and then generalize to multiple clock paths.
\end{enumerate}
\textbf{Subroutine: Single clock path}
Every logical qubit with a $H_{Start}$ on it must be in the $\ket{0_L}$ state at $t=0$. Every other logical qubit could be in any state, including being entangled with each other; but we will see that we can assume that they all begin in the $\ket{U_L}$ state, representing an ``undefined'' logical state, without losing completeness.
Any time a qubit in the $\ket{U_L}$ state reaches a unitary gate, we can find a satisfying assignment for that circuit by terminating history there. In a fully functional circuit with unitaries $U_1, \dots U_T$ and initial state $\ket{0^n}$, the full solution to the constraint problem would be the state
$$\ket{\Psi} = \frac{1}{\sqrt{T}}\sum_{i=1}^T (U_iU_{i-1}\dots U_1\ket{0^n})\otimes \ket{Clock_i}$$
and the $H_{End}$ checks the $\ket{Clock_T}$ subspace. But if an input qudit $\ket{U_L}$ is operated on at time $t$, then a solution to the constraint problem is the state
$$\ket{\Psi} = \frac{1}{\sqrt{t}}\sum_{i=1}^t (U_iU_{i-1}\dots U_1\ket{0^n})\otimes \ket{Clock_i}$$
This is in the ground state of each $H_{prop}$, and when $H_{End}$ checks the $\ket{Clock_T}$ subspace, we are trivially in its ground state, because the projection of $\ket{\Psi}$ in that subspace is the zero vector.
\par So, if any undefined qubit is acted on by a unitary, we know we can accept, without any further checking. If no such case arises, then any solution to the QCSP must be a valid computational history. A quantum computer can execute the circuit from the known starting state $\ket{0^n}$, and measure all the check qudits, the qudits with $H_{End}$ applied. If these are all in the $\ket{0}$ state, we accept. If any are in the $\ket{1}$ state, we reject.

\textbf{Subroutine: Multiple clock paths}
If we have multiple clocks that operate on disjoint sets of qubits, then our constraint problem is completely separable, and we need only to verify each part independently. It becomes more difficult if both clock paths are operating on the same set of qubits, as there is no meaningful ordering of time. Whenever a unitary operates on a qubit and it changes state, that qubit becomes entangled with the state of that clock path. \rchange{Due to monogamy, the qubit cannot be appropriately entangled with two clock paths simultaneously.}{Coupling a logical qubit to two different clock paths does not directly violate monogamy (as a bound on entanglement measures), as the qubit will not be completely entangled with the clocks. It does, however, necessarily create a frustrated system:}

\begin{lemma}[No Shared Logical Qubits]
\label{noShare}
\rnew{Two clock paths that act on a shared logical qubit necessarily create a frustrated instance.}
\end{lemma}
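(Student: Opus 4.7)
My plan is to argue by contradiction: assume a nonzero simultaneous kernel state $\ket{\psi}$ exists and exploit the propagation structure in both clock directions to force $\ket{\psi}=0$.

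First, I would use the structural clauses ($H_C$, $H_L$, $H_{BP}$, and the Bell-pair portions of $H_{Start}$ and $H_{End}$ together with monogamy) to conclude that $\ket{\psi}$ lives in the subspace where each clock path carries a unary clock state, and the two clock chains of $C_1$ and $C_2$ are structurally independent. This lets me decompose
\begin{equation*}
\ket{\psi} \;=\; \sum_{t_1=0}^{T_1}\sum_{t_2=0}^{T_2} \ket{t_1}_{C_1}\otimes \ket{t_2}_{C_2}\otimes \ket{\phi(t_1,t_2)},
\end{equation*}
where $\ket{\phi(t_1,t_2)}$ is an unnormalized state of the logical qudits. Applying the single-clock propagation analysis from the paper to each direction independently, the $H_{prop,U}$ clauses of $C_1$ relate $\ket{\phi(t_1{+}1,t_2)}$ to $\ket{\phi(t_1,t_2)}$ by the corresponding gate $U^{(1)}_{t_1}$ composed with the $P_D^{\otimes 2}$ projection onto the defined subspace of the qubits that gate touches, and symmetrically for $C_2$.

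Next, I would focus on the shared logical qubit $q$. Let $t_i^*$ be the earliest clock time in $C_i$ at which a gate of $C_i$ touches $q$. The two-dimensional propagation forces $\ket{\phi(t_1,t_2)}$ to be obtained from $\ket{\phi(0,0)}$ by applying any prefix of $C_1$'s gates followed by any prefix of $C_2$'s (or vice versa), and the two orderings must agree. For the slice $t_2 = 0$, before $C_2$ has had a chance to touch $q$, the $C_1$ evolution drags $q$ through a sequence of gates from $\mathcal{G}$. Since $\mathcal{G}$ is chosen so that no gate fixes any classical basis state, once $C_1$ acts on $q$ the logical state on $q$ is no longer any pure computational basis vector. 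Combined with the $H_{Start}$ slices — which, if either clock imposes one on $q$, pin $q$ to $\ket{0_L}$ along a full row/column of the grid — this produces an immediate contradiction with the propagation relation, forcing $\ket{\phi(0,0)}$ to lie in the undefined subspace on $q$.

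The main obstacle is then the residual case where $q$ is in $\ket{U_L}$ so that both clocks early-terminate at $t_1^*$ and $t_2^*$. Here I would argue that the truncation has to be simultaneous in both directions, so $\ket{\phi(t_1,t_2)}$ is supported only on the subgrid $t_1 < t_1^*$, $t_2 < t_2^*$; the $H_{End}$ clause for each clock then acts on the $t_i = T_i$ slice, where the state must be supported only if the corresponding check qubit is $\ket{0_L}$ or $\ket{U_L}$ — yet the equal-amplitude propagation along the non-truncated direction propagates from a slice with nonzero amplitude all the way to $t_i = T_i$, producing an $\ket{1_L}$ component incompatible with $H_{End}$. This step, carefully handling the interplay between early termination in one direction and full propagation in the other, is the most delicate part; once it is done, every candidate $\ket{\phi(0,0)}$ yields $\ket{\psi}=0$, so the instance is frustrated.
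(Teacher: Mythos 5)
Your decomposition into a two-dimensional grid $\ket{\psi}=\sum_{t_1,t_2}\ket{t_1}_{C_1}\ket{t_2}_{C_2}\ket{\phi(t_1,t_2)}$ is a genuinely different route from the paper. The paper's proof avoids any explicit propagation analysis: it observes that a valid history for a single clock path and its logical register is necessarily a pure state on $C_i\cup L_i$ (fixed input, unitary gates), and that the shared qubit $L_3$ must be entangled with $C_1$ (because the chosen gate set never fixes a basis state, so the first gate to touch $L_3$ entangles it with the clock). Therefore $C_2\cup L_2$, which contains $L_3$ but not $C_1$, cannot itself be pure --- a clean, two-line monogamy-style contradiction. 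Your step with $H_{\mathrm{Start}}$ pinning $q$ to $\ket{0_L}$ along a whole row/column while $C_1$'s propagation evolves $q$ away from $\ket{0_L}$ along that row is reaching the same conclusion through explicit grid bookkeeping; it is plausible but considerably more laborious and harder to make airtight (you would need to verify that the two propagation directions are even mutually consistent before invoking the pinning).

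The real problem is the ``residual case'' in your last paragraph, and it is not merely delicate --- it is false as stated. If both clocks early-terminate at $t_1^*$ and $t_2^*$ because $q$ enters $\ket{U_L}$, then the history state's amplitude vanishes on $t_1\ge t_1^*$ and $t_2\ge t_2^*$, so both $H_{\mathrm{End}}$ clauses see zero amplitude on their $t_i=T_i$ slices and are \emph{trivially} satisfied; there is no $\ket{1_L}$ component to conflict with. Your proposed conclusion that ``every candidate $\ket{\phi(0,0)}$ yields $\ket{\psi}=0$'' would contradict the paper's own subsequent casework: items 2, 3, and the tail of item 4 in the enumeration right after the lemma exhibit exactly such satisfiable instances where a shared qubit is tolerated because the relevant gates are all fed $\ket{U_L}$. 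The lemma, as the paper applies it in case 4, is scoped to the complete-history situation (``in the case where none of the unitaries are given $\ket{U_L}$ inputs''), with the undefined-qubit wrinkle handled afterward by the truncation/preprocessing steps, not by the lemma. So you should either restrict the lemma's scope the way the paper implicitly does and drop step 5 entirely, or you will be trying to prove something the paper itself shows is not true in full generality.
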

\begin{proof}

\par \rnew{A valid computational history for one circuit (one clock path and its associated logical bits) is necessarily a pure state. There is a single specified input state, $\ket{0^n}$, and each applied gate is unitary; the full history can only be the pure state $\frac{1}{\sqrt{N}}\sum \ket{\psi_i}_L \otimes \ket{i}_C$. If any two of the $\ket{\psi_i}$ are not equal, the logical and clock qubits are entangled (not necessarily maximally). More precisely, in the circuits we consider here, the first time that each logical qubit is acted on, it necessarily changes. (This is a consequence of the choice of universal gate set.) This implies that each logical qubit acted on is necessarily entangled with the clock subspace.}

\par \rnew{So, consider a problem instance with two clock subsystems $C_1$ and $C_2$, that act on logical subsystems $L_1$ and $L_2$, with nonempty intersections $L_3 = L_1 \cap L_2$. Purity of computational histories implies that a frustration-free ground state restricted to $C_1 \cup L_1$ must be a pure state, as well as $C_2 \cup L_2$. $L_3$ is entangled with $C_1$, meaning that any subsystem including only one of $L_3$ or $C_1$ will necessarily be impure. $C_2 \cup L_2$ includes $L_3$ but not $C_1$, meaning it must be impure, a contradiction. So, there cannot be a frustration-free ground state.}
\end{proof}

\rnew{The remaining wrinkle here is that we might not have complete computational histories, and instead one or more of the logical qubits in each circuit may be undefined.} Thus we need to check, for each logical qubit: is it only used by one clock path? If not, is there a single state it can be in for all time in each circuit?
\par First, analyze each circuit to check for undefined variables entering unitaries. If one does, we can neglect all the remainder of that clock path, and just focus on the unitaries before the problematic one. Then check the compatibility between different clock paths:

\begin{enumerate}
\item If a qubit is only present in clauses from one clock path, this qubit is fine.
\item If this qubit is acted on by $H_{Start}$ or $H_{End}$ in multiple clock paths, but is never acted on by a gate, this qubit is fine: it can be put in a \rchange{fixed}{pure} $\ket{0_L}$ state.
\item If this qubit has no $H_{Start}$ on it, then this qubit is fine: it can be \rchange{left undefined}{put in a pure $\ket{U_L}$ state}.
\item Otherwise the qubit is initialized in multiple clock paths to $\ket{0_L}$, and subsequently acted on by unitaries by multiple clock paths. \rchange{Any unitary from our gate set acting on $\ket{0_L}$ produces a state at least a bounded distance away, leading to entanglement with that clock qubit. By monogamy, this cannot occur with both clock qubits, and we can reject this instance.}{Because it must be in a pure state when viewed together with either clock subsystem, the only possibility is that is in a pure state, and $H_{Start}$ tells us this must be $\ket{0_L}$. By Lemma \ref{noShare}, this will lead to frustration in the case where none of the unitaries are given $\ket{U_L}$ inputs. It could still be permissible if, for instance, the other inputs to unitaries are all $\ket{U_L}$. Programmatically, the way to resolve this case is to truncate each circuit at the first instance of a used undefined qubit (as time need not progress past that point). If double-used qubits remain, they really must be entangled with each of the two clock paths, and we can reject.}
\end{enumerate}
Then we can run the single clock path subroutine on each induced circuit, checking them each separately. If they all pass, then we can accept the problem.

\par \rnew{We have used classical preprocessing to separate out the system into a number of single clock-path subsystem $S_1 \dots S_\nu$ with no interactions between subsystems. We have a zero-energy ground state $\ket{\Psi}$ for the whole system iff we have zero-energy ground states $\ket{\Psi_\nu}$ for each subsytem. Each $\ket{\Psi_\nu}$ can be decomposed into a history state $\sum_{i=1}^N \alpha_i \ket{\psi_i}_L \ket{i}_C$. Each propagator at time $t$ implies $\alpha_t = \alpha_{t-1}$ and $\ket{\psi_t} = U_t \ket{\psi_{t-1}}$, up until it acts on an undefined input at some time $t_U$, where we can WLOG take $\alpha_i = 0$ for all $i \ge t_U$. If no undefined inputs are acted on, we end up with a uniform superposition of time, and $H_{End}$ validates the result.}

\par The soundness and correctness of the algorithm follow from the fact that, at each step of the preprocessing, we reject only those that we know are unsatisfiable (such as step 2), and neglect only the constraints and variables that we know are irrelvant (such as step 1). At the end we execute the circuit(s) on a quantum computer to check the results. If the instance has an exact solution, then the output qubit(s) of the circuit are precisely in the $\ket{0_L}$ state, and we accept with probability 1. This is perfect completeness.
\par Any satisfying instance of the circuit must be a valid execution history, so the only way left it could fail is if the output qubit(s) have nonzero amplitude in the $\ket{1_L}$ state. By the promise, this must be at least a $1/p(n)$ amplitude, and so our execution of the circuit has at least a $1/p(n)^2$ chance of rejecting. This is soundness.

\proofpart{2}{Quantum Clock-Horn-SAT is BQP$_1$-Hard}
Let $U_X$ be the uniform quantum circuit for a given BQP$_1$ problem. Let $m$ be the number of qubits that $U_X$ acts upon, and $k$ be the number of gates applied. Without loss of generality, we assume the gates are applied one at a time, one per time step. Then take a Quantum Clock-Horn-SAT problem on $m+k+4$ sites. $m$ will be used for the qubits, $k$ for the time steps, and 4 for the first and last clock sites and the endpoints. Label the logical qubit sites  as $Q_i$, the timestep sites as $T_i$, and the other four as $S$, $T_0$, $T_{k+1}$, and $E$. Then the clauses in the problem are:
\begin{enumerate}
\item For each $i\in [m]$: an $H_{Start}$ clause, acting on $S, T_0, T_1, Q_i$.
\item For each $t\in [k]$, where $U_t$ is the $t$th unitary of $U_X$ acting on bits $j$ and $k$: an $H_{prop,U}$ clause, acting on $Q_j, Q_k, T_{t-1}, T_t, T_{t+1}$.
\item A final $H_{End}$ clause, acting on $E, T_{k+1}, T_k, Q_1$
\end{enumerate}
Then we claim this problem, denoted $\mathcal{P}(U_X)$, is satisfiable iff $U_X \in L_{yes}$. First observe that each qudit can be identified as a logical, clock, or endpoint qudit, and so any satisfying assignment must be in the appropriate subspaces. We can now break each qudit into the subspaces of $\ket{0_CA}$, $\ket{0_CL}$, and so on. The entanglement qubits are now an entirely separate problem from the logical and ``logical clock'' qubits, and they can be easily recognized to be satisfied by Bell pairs. Since all qubits are initialized in the $\ket{0_L}$ state, it can be seen inductiviely that they can never enter the undefined $\ket{U_L}$ state without polynomially large energy penalty. \rnew{This lets us restrict to our attention to the behavior when all logical qubits are in the $\ket{0_L}$/$\ket{1_L}$ subspace. We can remove the Bell pairs, which are otherwise uncoupled, from the system.} What remains is \rnew{precisely} Kitaev's clock Hamiltonian\rchange{, which has been proved using perturbation theory to be satisfiable only if the circuit accepts with probability exponentially close to 1.}{ from \cite{Kitaev02}, up to a renaming of the basis states, which is shown there to be frustration-free iff the modelled circuit accepts. Thus the our proof of correctness requires very little direct algebraic manipulation: after throwing away the extra subspaces, we are left with the same verbatim Hamiltonian, and we can rely on that result. Thus the majority of the arguing above is about under exactly what circumstances we can discard certain subspaces or summarily reject invalid instances.}
\par \rnew{We briefly summarize Kitaev's proof here for completeness. The history state $\ket{\Psi}$ can be decomposed in the clock basis as $\ket{\Psi} = \sum_{i=0}^k \alpha_i \ket{\psi_i}_L\ket{i}_C$ with $\alpha_i$ real and positive. $\ket{\psi_0}$ is necessarily the initializing $\ket{0}^n$ state on some qubits. By construction of this hard instance, there are no undefined qubits. The propagator at time $t$ can only be in its ground state if $\alpha_{t-1} = \alpha_t$, and by induction all $\alpha$'s are equal to $1/\sqrt{k}$. Propagator $t$ also can only be in its ground state if $\ket{\psi_t}$ = $U_t \ket{\psi_{t-1}}$, implying $\ket{\psi_k} = U_{circuit} \ket{\psi_1}$. The $H_{End}$ then will produce an energy penalty unless the result of the circuit in $Q_1$ is a $\ket{0}$ state at step $t=k$.}
\par If the original problem was in the language, the \cls{BQP_1} circuit always accepts, and the QCSP is exactly satisfiable. If the original problem was not in the language, the \cls{BQP_1} circuit rejects with probability at least 1/2, and so there is no approximately satisfying assignment in the QCSP.
\end{Proof}

\section{\cls{QCMA} completeness}
There is a straightforward modification of the above that shows the existence of \cls{QCMA}-complete local Hamiltonian problems. In fact, we will construct a \cls{QCMA_1}-complete problem, then use the result of \cite{qcma1qcma} that \cls{QCMA_1}=\cls{QCMA}. The class \cls{QCMA_1} is similar to \cls{BQP_1}, except that the input is allowed to be a {\em classical} proof string. The construction remains largely the same as above, so we will not reiterate all of it, only the relevant modifications. We need to permit the logical bits to be in either the $\ket{0}$ or $\ket{1}$ state -- but not any superposition of the two, nor any entangled state.
\par One strategy to do this, if someone gave us a proof state to verify, would be to simply measure the proof state in the classical basis before proceeding with the computation. If they tried to decieve us by giving us anything other than a classical basis state, this measurement would collapse it into a classical basis state -- or, more precisely, a mixture of classical states.
\par But a constraint problem has no notion of ``measurement''. What we can do is request two copies of the state, and then perform a Swap Test on each pair of qubits. That is, for each qubit input to the problem, we take two copies of the qubit, and impose a clause
$$H_{Commit} = I - \ket{00}\bra{00} - \ket{11}\bra{11}$$
on them. This will force that measuring the second qubit in the classical basis puts the first qubit in a pure state; the first qubit cannot be in the $\ket{+}$ state or entangled with any other qubits in the input. The second qubit acts as a commitment qubit.
\par Again, we can't actually measure the second qubit, but if we make sure never to touch them for the duration of the computation, they are effectively removed from the problem, and we can trace them out to understand the computation. To prevent ourselves from using the commitment qubits in the computation, we can create a separate subspace of states for them, so that we can't get them confused with our actual logic qubits. So the construction is:
\par First, define two more basis states, $\ket{0_P}$ and $\ket{1_P}$, the commitment bit states. In addition to the $H_{Start}$ clause that forces a bit to start off as zero, we define a 4-local clause $H_{Start-Unk}$ that starts a bit off in either $\ket{0_L}$ or $\ket{1_L}$. It classically copies the logical bit to the commitment bit:
$$H_{Commit} = I - \ket{0_L0_P}\bra{0_L0_P} - \ket{1_L1_P}\bra{1_L1_P}$$
and then
\begin{align}
\label{hqcma}
H_{Start-Unk} =\,& (I - (\ket{0_{EC}0_{CA}}+\ket{1_{EC}1_{CA}})(\bra{0_{EC}0_{CA}}+\bra{1_{EC}1_{CA}}))_{12}\\
& + (I - \ket{0_{CL}}\bra{0_{CL}})\otimes (I - \ket{0_{L}}\bra{0_{L}} - \ket{1_L}\bra{1_L})_{23}\nonumber\\
& + ((I - \ket{0_{CL}}\bra{0_{CL}}) \otimes H_{Commit})_{134}\nonumber\\
&+ H_{E,1} + H_{C,2} + H_{L,3} \nonumber
\end{align}
The first, second, and fourth lines are identical to $H_{Start}$. The third line requires that, at $t=0$, the bit must be in only logical states 0 or 1, and not undefined. This is comparable to the \cls{BQP_1} construction, where we required that an input bit may only be 0. The fourth line of $H_{Start-Unk}$ requires that, at $t=0$, the bit must be equal to the commitment bit.

\par The commitment effectively turns our computation's initial state into a partial trace that removes the commitment qubits, which is a mixture of classical states. So we know that if the final computation succeeds, it is only because the input was a mixture of suceeding classical bitstrings -- which would imply that there was indeed at least one valid proof string.
\par \rnew{An equivalent setup (although not easily built as a QCSP) would be the having some $k$ extra logical qubits next to some $k < n$ we normally work with, with a requirement that the first time uses gates to copy (with \cls{CNOT}) from the originals to the extras; and the extras are never touched again. This setup is easy to analyze in the sense that we can discuss its history state. If the logical state of the verifying circuit at $t=0$ is $\ket{\psi_0} = \sum_{b=0}^{2^k} \beta_b \ket{0}^{n-k} \ket{b}$, then the initial state of the enlarged system would be}
$$\rnew{\sum_{b=0}^{2^n} \beta_b \ket{0}^{n-k} \ket{b}\ket{b}}$$
\rnew{and the full history state is}
$$\rnew{\frac{1}{\sqrt T}\sum_{t=0}^{T} U_{1\dots t}\sum_{b=0}^{2^n} \beta_b \ket{0}^{n-k}\ket{b}\ket{b} = \sum_{b=0}^{2^n} \beta_b\left(\frac{1}{\sqrt T}\sum_{t=0}^{T} U_{1\dots t} \ket{0}^{n-k}\ket{b}\right)\otimes \ket{b}.}$$
\rnew{When the first half is examined in isolation, and the half consisting of the extra copies is removed, what remains is a classical mixture of history states run on different classical input strings $\ket{b}$.}
\par \rnew{Given a verifier circuit for a \cls{QCMA} problem, we can embed it the same way we embedded in Part 2 for \cls{BQP} problems. The witness bits (which are absent in \cls{BQP}) get $H_{Start-Unk}$ instead of $H_{Start}$. Iff there is a frustration-free state, the bitstrings $\ket{b}$ in its support produce a passing output in the verifying circuit, in the same way that $\ket{0}$ produces a passing output in a \cls{BQP} circuit.} This correctness shows that the problem is \cls{QCMA_1}-hard.

\par To see that the problem is also in \cls{QCMA_1}, the proof is largely the same as the \cls{BQP_1} problem. In order to solve it in \cls{QCMA_1}, we \rchange{take a}{will accept a} classical proof string \rchange{that is the proof stings of satisfying assignments to all circuits}{built from concatenating the proof strings to each circuit chunk} in the problem\rnew{, and running each verifying circuit}. The arguments about identifying chunks, clock paths, and undefined bits all go through as before. There is the new case where a single commitment bit is a copy of several distinct input bits, or even input bits in separate circuits. But this merely imposes the restriction that the classical proof string has equal bits in those two locations, which does not make the problem any harder to verify\rnew{: when a proof string is provided, check for any commitment clauses that share the same logical qubits, and if those two bits in the proof string differ, reject. Otherwise, the problem is identical to the problem in \cls{BQP}, except with this alternate provided input state, and the same algorithm can be used with this alternate initial state.} This modified verification algorithm implies that the problem is in \cls{QCMA_1}, and so is \cls{QCMA_1}-complete.

\section{\cls{coRP} completeness}
The reduction to a \cls{coRP}-complete problem is even simpler, since we build on the idea of proving \cls{MA}-hardness from \cite{Brav09}:

\begin{displayquote}
Any classical \cls{MA} verifier $V$ can be transformed into a quantum verifier $V'$ which uses a quantum circuit $U$ involving only classical reversible gates (for example, the 3-qubit Toffoli gates) together with ancillary states $\ket{0}$, $\ket{+}$, and measures one of the output qubits in the $\ket{0}$, $\ket{1}$ basis.
\end{displayquote}

Note that \cls{MA} is just the version of \cls{coRP} that is allowed to have a proof string provided. (Really, this is \cls{MA_1}; but \cls{MA=MA_1}.) Thus, we can modify the \cls{BQP_1}-complete problem as follows:
\begin{itemize}
\item Replace the universal quantum 2-qubit gate set with a universal classical reversible 3-bit gate set. This changes from $H_{prop}$ from being 5-local to being 6-local.
\item In addition to $H_{Start}$, we have a $H_{Start-Rand}$. They are identical except in that $H_{Start-Rand}$ initializes in the $\ket{+}$ state instead of $\ket{0}$.
\end{itemize}
The remainder of the proof holds just as before. The preprocessing is doable classically, and so can be executed by even a simple \cls{coRP} machine. The final circuit to evaluate is a classical probabilistic verifier and so can be done by the \cls{coRP} machine as well.
\par\rnew{To be precise, the \cls{BQP} proof is completely agnostic to the gate set, in the sense that it shows the clauses built from gate set $\mathcal{G}$ is a complete problem for computations with the gate set $\mathcal{G}$. It is also agnostic to how the initial state is specified, as long as there is no entanglement. Separately, it was shown in \cite{Brav09} that the gate set of the 3-qubit Toffoli gates is universal for classical probabilistic computation, given access to initial states $\ket{0}$ and $\ket{+}$. Since the \cls{BQP}-completeness proof is agnostic to these modifications, the resulting set of clauses is complete for \cls{coRP}.} 
\par In \cite{PCP2019} a coRP-complete problem involving local Hamiltonians is also constructed, using their notion of ``pinned'' Hamiltonians: these are problems with the promise that the ground state $\ket{\psi}$ has nonzero overlap with the all-zero state, i.e. $\braket{0|\psi}>0$. This pinning promise allows the verifier to assume that $\ket{0}$ is a functioning 'witness', reducing the complexity from MA to coRP. The ``pinning'' promise cannot be expressed in terms of local constraints, though, so it is not a QCSP in the sense we have defined it.

\section{Weak QCSPs}
As noted in section 3.1, there is the detail of gate sets and exact results. For classical computations, there are finite universal gate sets (such as $\{CCNOT\}$). Even for probabilistic computations, we only need uniformly random bits (or a 50-50 bit flip gate) to build a robust definition of \cls{BPP}, \cls{RP}, \cls{MA}, \cls{AM}, \cls{PP}... and so on -- as uniformly random bits can be used to closely approximate other probabilities, and we only need certain bounds on the probability of a given trial.
\par For quantum circuits, phases can cancel out exactly, so that a single phase gate or qubit distribution cannot suffice for exact computations. Since different circuits might need all variety of phases, we cannot keep perfect exactness with a finite set while allowing all the circuits we might want; at the same time, allowing an infinite set of gates creates issues such as uncomputable amplitudes, and raises questions of how we are to encode the circuit.
\par The Solovay-Kitaev theorem gives a weaker, but arguably more natural, notion of universal gate set, one that allows us to approximate a given gate to exponentially good accuracy in polynomial time. This is not useful to use in studying constraint problems as we have defined them together, as we have required that all ground states be exact. To remedy this, we present the notion of a {\em weak quantum CSP}, which will allow exponentially small errors (energy) in the ground state.
\begin{definition}[Weak Quantum CSP]
 A weak QCSP has a domain size $d$, a set of clauses $C = \{\mathcal{H}_i\}$, and constants $a$, $b$, and $c$, with $b > a > 0$. Each clause $\mathcal{H}_i$ of arity $m$ is a Hermitian projector on $(\mathbb{C}^d)^{\otimes m}$.
 \par An instance of this weak QCSP is given by an integer $n$ indicating the number of $d$-qudits, and a list of clauses that apply to $n$ qudits. The instance is satisfiable if the ground state energy is less than $a/n^c$, and is unsatisfiable if the ground state energy is greater than $b/n^c$, and we are promised one of these is the case. (So, weak QCSPs are a class of promise problems.)
\end{definition}
\par It is worth emphasizing that the constants $a$, $b$, and $c$ are not allowed to vary with the instance, but are rather part of the language itself. This restriction means that every weak QCSP is in \cls{QMA}, because we only need to measure the energy to within a polynomial gap: after $O(n^b)$ measurements we can have a probability of error bounded away from $1/2$. We choose to fix these constants for a whole constraint problem class, in contrast to $k\textsc{-Local-Hamiltonian}$ in \cite{Kemp03}, where they are parameters of an instance.
\par At the same time, weak QCSPs are generally independent of gate set
. While strong QCSPs naturally describe complexity classes with one-sided error, weak QCSPs naturally describe complexity classes with two-sided bounded error. If we have a quantum algorithm that uses a gate set $\mathcal{G}_1$ to solve a problem with two-sided bounded error in polynomially many gates, we can also solve with two-sided bounded error in polynomially man gates using any other universal gate set $\mathcal{G}_2$. This is shown by applying the Solovay-Kitaev theorem, to simulate the first gate set using the second one. The Solovay-Kitaev theorem provides $O(2^{-k})$ precision per gate with only $O(k)$ times more gates. If the original algorithm uses $f(n)$ gates, we can always solve the weak QCSP in $O(f(n) \log(f(n))$ gates with bounded error. The fact that we don't have to worry about exact gate set will make several complexity classes more natural, at the expense of a more complicated notion of ``problem''.

\subsection{Results on weak problems}
First, it is straightforward to verify that there are still \cls{P}, \cls{NP}, \cls{MA}, and \cls{QMA} complete problems among weak QCSPs. \cls{P}-complete and \cls{NP}-complete problems like Horn-SAT and 3-SAT are described by commuting projectors, so the ground state energy of an unsatisfiable instance is always at least 1. Thus taking $a=1/3,\,b=2/3,c=0$ suffices to formulate those problems as weak QCSPs.
\par To see that we still have a \cls{QMA}-complete problem, refer to the specifics of the proof that $3\textsc{-Local-Hamiltonian}$ is \cls{QMA}-complete in \cite{Kemp03}: their construction yields a problem whose ground state energy is below $\frac{c_1\varepsilon}{n}$ if satisfiable, or above $\frac{c_2}{n^3}$ is unsatisfiable, where $\varepsilon$ is the algorithm's allowed probability of error. By running the algorithm $O(n)$ times in series, $\varepsilon$ becomes $O(2^{-n}) < \frac{1}{n^2}$, and so it suffices to take $a=c_1,\,b=c_2,\,c=3$. Then their construction permits the rewriting of any \cls{QMA} circuit into an instance of this weak QCSP. We could also do this with \cls{QMA_1}-complete \cls{3-QSAT} to arrive at \cls{QMA}-complete weak QCSP.
\par This shows the scheme by which a $1/poly(n)$ gap promise translates into the weakness of the QCSP. \cite{Brav09} have the same gap for their their \cls{MA}-complete problem, and so it also can be described as weak QCSP. Interestingly, in this case, \cls{MA}$=$\cls{MA_1}, and so the complexity class does not change; the same holds for \cls{QCMA}$=$\cls{QCMA_1}.
\par The other two cases we gave above, \cls{coRP} and \cls{BQP_1} also all have polynomially small gaps, and are designed to emulate a circuit that accepts perfectly on accepting instances. If we instead allowed our original circuit to have a polynomially small {\em two}-sided error, we could build a weak QCSP instead, that is complete for the corresponding two-sided error complexity class. If the original QCSP had a minimum unsatisfiable ground-state energy of $O(n^{-p})$, then whatever circuit we are embedding as QCSP, let us repeat it enough times that its error is $O(n^{-p-1})$. Then choosing $c = p+0.5$ and taking any positive $a$ and $b$ describe it as a weak QCSP. This gives the following result.
\begin{corollary}
There are weak QCSPs that are complete for the classes \cls{BPP} and \cls{BQP}.
\end{corollary}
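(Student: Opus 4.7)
The plan is to reuse the constructions of Theorems \ref{thm:BQP1} and \ref{thm:coRP} essentially verbatim, observing that the weak-QCSP promise absorbs the two-sided error of a \cls{BQP} (resp.\ \cls{BPP}) verifier. The clause families $\mathcal{C}$ remain unchanged; only the amount of amplification applied to the underlying verifier and the constants $(a,b,c)$ in the weak-QCSP promise need to be chosen.

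First I would fix a \cls{BQP} verifier $U$ for a language $L$ on inputs of size $n$, with error probability $\varepsilon$ and $T = \mathrm{poly}(n)$ gates, and embed it into an instance of \QCTS using the same circuit-to-Hamiltonian reduction as in Part 2 of the proof of Theorem \ref{thm:BQP1}. The novelty is that $U$ is no longer required to accept with probability exactly $1$. After the chunking preprocessing of Algorithm 1 strips away the Bell-pair and type-checking sectors, the surviving Hamiltonian is essentially Kitaev's clock Hamiltonian, to which I would apply the standard analysis. On the history state $\ket{\Psi} = \frac{1}{\sqrt{T+1}}\sum_{t=0}^{T} (U_t \cdots U_1 \ket{0^n}) \otimes \ket{Clock_t}$, all propagation and initialization terms vanish while $H_{End}$ contributes energy $(1 - p_{acc})/(T+1) \le \varepsilon/(T+1)$, so accepting instances have ground-state energy at most $\varepsilon/\mathrm{poly}(n)$. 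Conversely, the Kempe--Kitaev--Regev geometric-lemma bound shows that whenever $p_{acc} \le 1 - \varepsilon$, the ground-state energy is at least $\Omega(\varepsilon/T^3) = \Omega(1/\mathrm{poly}(n))$.

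Then I would amplify the verifier by $\Theta(n)$ sequential repetitions with majority vote, driving $\varepsilon$ to $2^{-\Omega(n)}$ while keeping the circuit size polynomial. Picking $c$ strictly larger than the exponent of the $\mathrm{poly}(n)$ in the soundness lower bound, and then choosing any $0 < a < b$ such that the amplified accepting energy lies below $a/n^c$ and the rejecting energy lies above $b/n^c$, yields a valid weak QCSP complete for \cls{BQP}. Containment in \cls{BQP} is immediate: Algorithm 1 of Theorem \ref{thm:BQP1} is already a quantum polynomial-time procedure, and dropping the perfect-completeness requirement is precisely what a generic \cls{BQP} machine accommodates. The \cls{BPP} result follows by running the identical argument on the \cls{coRP}-complete construction of Theorem \ref{thm:coRP}, using the universal classical reversible gate set and $H_{Start-Rand}$. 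The main obstacle is verifying that the $\ket{U_L}$ branch of the propagator kernel does not supply a spurious low-energy state on rejecting instances; but since the reduction places $H_{Start}$ on every logical qubit, no logical qubit can enter the $\ket{U_L}$ sector without a polynomial energy penalty, so only the genuine computational-history subspace is accessible and Kitaev's gap analysis applies unchanged.
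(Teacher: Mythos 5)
Your argument is correct and follows the same route as the paper: reuse the \cls{BQP_1}-complete and \cls{coRP}-complete clause sets unchanged, embed a two-sided-error verifier (amplified until its error is safely below the unsatisfiable-case gap), and choose $(a,b,c)$ to straddle the resulting polynomial separation between the history-state energy and the Kitaev geometric-lemma lower bound. The only differences from the paper are quantitative conveniences -- you amplify all the way to $2^{-\Omega(n)}$ error where the paper stops at $O(n^{-p-1})$ and then picks $c = p + 0.5$ -- and these do not affect the structure or validity of the proof.
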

This means that any putative dichotomy-like theorem for weak QCSPs would need to at least account for the seven cases of \cls{P}, \cls{BPP}, \cls{NP}, \cls{MA}, \cls{BQP}, \cls{QCMA}, \cls{QMA} -- or show that some pair of these are equal. The insensitivty to choice of gate set could make this type of result more appealingly natural, and indeed more physical, as no real-world gate set can be realized with zero error.

\section{Universality of qubits for QCSPs}
We can show that any QCSP can be reduced to another problem using only only qubits, with little computational power. This may initially sound unsurprising, as operations on qubits are certainly universal for quantum computation. But it is perhaps surprising, in light of the fact that the analogous statement is believed to be {\em false} in the classical world!
\par Among classical constraint problems, it is believed that distinct complexity classes arise for different size domains. For boolean constraint problems, it is known\cite{binCsp2009} that every problem is either \cls{coNLOGTIME}, \cls{L}-complete, \cls{NL}-complete, $\oplus$\cls{L}-complete, \cls{P}-complete, or \cls{NP}-complete. This is a refinement of the dichotomy theorem specialized to boolean problems, as the first five classes are all contained within \cls{P}. Among ternary constraint problems though, there are new classes that appear, such as \cls{Mod_3L}-complete, which are not expected to be equal to any of the six the previously listed.\cite{modKL}\cite{polymorphisms}
\par When constructing a circuit for a quantum computer, we can emulate a $d$-qudit with a $\lceil \log_2(d)\rceil$ qubits, and carry out unitaries on those qubits. We can certainly try the same thing for a QCSP, turning (for instance) each 4-qudit into 2 qubits, and a $k$-local clause becomes $2k$-local. The issue arises that we cannot ensure that the $2k$-local clauses are applied to qubits in a consistent fashion. One clause might treat a particular qubit as ``bit 1'' of an 4-qudit, while another clause might use that same qubit as ``bit 2''. This would lead to constraints that were previously unrealizable. Clauses could also ``mix and match'', combining ``bit 1'' from one 4-qudit with ``bit 2'' from another 4-qudit. The exact same problems exist in the classical setting.
\par In the quantum world, we can fix this, again by using monogamy to bind together our constituent qubits into ordered, entangled larger systems. Each clause in the resulting problem is given a projector that forces this particular ordering of qubits, and any two clauses that try use the same qubits in multiple ways are frustrated. Formally,

\begin{duplicate}[Theorem~\ref{thm:q2Informal} (formal)]
\label{thm:q2formal}
For any QCSP $\mathcal{C}$ on $d$-qudits, there is another QCSP $\mathcal{C}'$ on qubits, and \cls{AC^0} circuits $f$ and $g$, such that $f$ reduces $\mathcal{C}$ to $\mathcal{C'}$, and $g$ reduces $\mathcal{C}'$ to $\mathcal{C}$. If $\mathcal{C}$ is $k$-local, then $\mathcal{C'}$ can be chosen to be $4\cdot 2^{\lceil \log_2\left(\lceil \log_2(d)\rceil \right) \rceil}k $ local (that is, $O(\log(d))$ times larger.)
\end{duplicate}

\begin{proof}
First we will show that for any $d$-qudits, we can reduce to 4-qudits; after that we will reduce to qubits. Finally we show that the reduction is in \cls{AC^0}.
\par We will view 4-qudit as the product of a ``data'' qubit and an ``entanglement'' qubit. A $d$-qudit will be replaced by $\rnew{n=\,}\lceil \log_2(d)\rceil$ many 4-qudits, and the state of the $d$-qudit will be encoded in the product of all the data qubits. If $d < 2^{\lceil \log_2(d)\rceil} \rnew{\,= 2^n}$, that is, if $d$ is not exactly a power of 2, we will have a Hamiltonian term $T_1$ in our clauses to ensure that the last $2^n - d$ states are not used. Acting on entanglement subspaces of the 4-qudits, consider a term $T_2$ whose nullspace consists of just the vector
$$\left(1\otimes X^\theta \otimes X^{2\theta} \otimes \dots X^{n\theta}\right)\frac{\ket{0}^n + \ket{1}^n}{\sqrt{2}}$$
where $\theta = \frac{1}{2(n+1)}$. This is a kind of GHZ state, which uses a slightly different pair of basis states (instead of just $\ket{0}$ and $\ket{1}$) on each separate qubit. Any bipartition of this state is impure, but since $T_2$ has a one-dimensional nullspace, it cannot be satisfied by any mixed state. Thus the sum of two $T_2$ on any two overlapping but distinct sets of 4-qudits will be frustrated. If two copies of $T_2$ act on the same 4-qudits in a different order, they will apply the wrong angles $X^{k\theta}$ at those sites, and the ground states do not align -- also leading to frustration.
\par Each clause $H$ of $\mathcal{C}$ is mapped to a new clause $H'$ that acts as $H$ on the data subspaces of each set of 4-qudits; that has $T_1 = 1 - \sum_i^d \ket{i}\bra{i}$ on each clumping of 4-qudits, to ensure that only the first $d$ states are used; and $T_2$ on each clumping of 4-qudits, to ensure that they will only be clustered to each other and in a particular order.
\par Then we want to reduce this from 4-qudits to qubits. Consider the following Hamiltonian on 4 qubits:
$$H_{4\to 2} = 1 - \ket{\psi_1}\bra{\psi_1} - \ket{\psi_2}\bra{\psi_2}  - \ket{\psi_3}\bra{\psi_3}  - \ket{\psi_4}\bra{\psi_4}$$
$$\ket{\psi_1} = \frac{1}{2}\left(\frac{3}{5}\ket{0000} - \frac{4}{5}\ket{0001} + \ket{0100} + \ket{1010} + \frac{8}{17}\ket{1100} + \frac{15}{17}\ket{1111}\right)$$
$$\ket{\psi_2} = \frac{1}{2}\left(\frac{4}{5}\ket{0000} + \frac{3}{5}\ket{0001} - \ket{0110} + \ket{1001} + \frac{20}{29}\ket{1101} + \frac{21}{29}\ket{1110}\right)$$
$$\ket{\psi_3} = \frac{1}{2}\left(\frac{5}{13}\ket{0010} + \frac{12}{13}\ket{0011} - \ket{0111} + \ket{1000} - \frac{21}{29}\ket{1101} + \frac{20}{29}\ket{1110}\right)$$
$$\ket{\psi_4} = \frac{1}{2}\left(\frac{-12}{13}\ket{0010} + \frac{5}{13}\ket{0011} - \ket{0101} + \ket{1011} - \frac{15}{17}\ket{1100} + \frac{8}{17}\ket{1111}\right)$$
Each $\ket{\psi_i}$ is orthonormal, so $H_{4\to 2}$ has a nullspace of dimension four. By inspecting the 840 distinct ways to apply two copies of $H_{4\to 2}$ to seven qubits, it can be checked that each sum will have a ground state above zero, except for the case where they are applied in the same way. This is a kind of ``uniqueness'' property that could be very loosely interpreted as monogamy for whole subspaces, instead of just one state.
\par By counting dimensions, one can check that this property is generic: it would hold almost always for any four random vectors. Unfortunately, for any simple and clean expressions one would write down, it would lack this property by one symmetry or another. This is why the simplest construction readily available, given above, is actually quite ugly.
\par Given a problem on 4-qudits, we can replace each 4-qudit with a collection of four qubits. Each clause is modified to act on the $\ket{\psi_i}$ basis of qubits instead of $\ket{i}$ basis of the 4-qudits. Then, for each 4-qudit that the clause acted on, we add a copy of $H_{4\to 2}$. The above uniqueness property ensures that no other clause can act on the same qubits in any other order, or mix them with any other set of qubits.
\par In the $\mathcal{C}'$ QCSP, any problem where qubits are mixed or applied in inconsistent orders, can immediately be rejected. Some qubits may not be acted on by any clause, and so not correspond to a $d$-qudit in $\mathcal{C}$, but then those qubits can simply be ignored. This leaves us with only correctly grouped qubits, in a certain subspace, that thus function equivalently to the $d$-qudits.
\par Combined, this gives a faithful reduction from $d$-qudits to qubits, and back again. It turns a $k$-local Hamiltonian into a $4\lceil \log_2(d)\rceil k$ local Hamiltonian. It remains to check the complexity of the reductions.
\par In the above description, the expansion factor is $4\lceil \log_2(d)\rceil$. To get a low circuit complexity, we want the expansion to be a power of two. Thus we round this up to $4\cdot 2^{\lceil \log_2\left(\lceil \log_2(d)\rceil \right) \rceil}$, which will denote $x$; the QCSP $\mathcal{C}'$ is augmented through just adding more qubits to increase the subspace dimension, and then $T_1$ is modified again to prevent those states from being occupied.
\par An instance of a $k$-local QCSP $\mathcal{C}$ can be written down as a list of integers, each given by an integer clause type, and $k$ many integers representing the qudits they act on. The clause types of $\mathcal{C'}$ are in one-to-one correspondence with the clause types of $\mathcal{C}$, so those integers remain unchanged. Each clause acting on qudit $i$ now instead acts on qudits $[xi,xi+1,\dots xi+(x-1)]$ in that order. Thus a reduction circuit $f$ only needs to be able to replicate an integer several times, multiply by a constant power of 2, $x$, and add a number $i \in [0,x)$. This is in $AC^0$ (in fact it requires no gates at all).
\par For a circuit $g$ to convert back from $\mathcal{C'}$ to $\mathcal{C}$, we need to map qubit numbers back to qudits numbers, and check that no qubits are used in inconsistent fashion. For each collection of qubits $[a_1,a_2,\dots a_x]$ that a clause in $\mathcal{C}'$ is acting on, we can map that to the $d$-qudit number $a_1$. Thus, many qudit numbers will go unused, but this doesn't affect the correctness: as long as all collections use the same numbers in the same order, they will all be mapped to $a_1$. To check that all qudits are used in a consistent fashion, we need to check for each pair of collections $([a_1,\dots a_x], [b_1,\dots b_x])$ that they do not use qubits in inconsistently. Logically, this reads:
$$\big(((a_i = b_i) \wedge (a_j = b_j)) \vee ((a_i \neq b_i) \wedge (a_j \neq b_j))\big) \wedge (a_i \neq b_j) \wedge (a_j \neq b_i)$$
and this must be checked for every collection, for every $i$ and $j$, and then combined by an unbounded fan-in AND. The integer equalities $a_i = b_i$ and $a_i \neq b_i$ can be evaluated with unbounded fan-in AND and OR respectively. This check is all in \cls{AC^0}. If the check fails, the circuit outputs some fixed clause(s) that are unsatisfiable, otherwise it outputs a repetition of the first clause. This \cls{AC^0} circuit checks that the qudits are used consistently, and if they are, gives an equivalent instance in the original language.
\end{proof}
If we don't care about having the reductions be in \cls{AC^0}, and instead allow \cls{P}-reductions, then $4\lceil \log_2(d)\rceil k$ locality suffices. This reduction is optimal within a factor of 4, in the sense that encoding one $d$-qudit in several qubits requires at least $\lceil \log_2(d)\rceil$ many qubits. In section 4 we showed that there is a 5-local 13-qudit problem that is \cls{BQP}-complete. Together with Theorem~\ref{thm:q2Informal}, we have:
\begin{corollary}
There is a \cls{BQP}-complete QCSP on qubits, with 80-local interactions.
\end{corollary}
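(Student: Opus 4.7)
The plan is to combine Theorem~\ref{thm:q2formal} with the \QCTS construction of Section~4. That construction provides a \cls{BQP}-complete QCSP with $k=5$-local clauses on $d=13$-dimensional qudits; feeding it into Theorem~\ref{thm:q2formal} produces an equivalent qubit QCSP whose locality is at most $4 \cdot 2^{\lceil \log_2 \lceil \log_2 d \rceil \rceil} \cdot k$, with equivalence witnessed by reductions in both directions that are even computable in \cls{AC^0}.

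The entire argument then amounts to an arithmetic plug-in, which I would carry out next. For $d=13$ we have $\lceil \log_2 13 \rceil = 4$, and then $\lceil \log_2 4 \rceil = 2$, giving a blow-up factor of $4 \cdot 2^2 = 16$; multiplying by $k=5$ yields locality $80$. Because the reductions furnished by Theorem~\ref{thm:q2formal} are in particular classical polynomial-time many-one reductions, both \cls{BQP}-hardness and membership in \cls{BQP} transfer from the 13-qudit problem to the new qubit problem, and we obtain the claimed 80-local \cls{BQP}-complete qubit QCSP.

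There is no genuine obstacle here --- the corollary is a direct composition of two results already in hand. The only step requiring any care is the numerical evaluation of the rounded-up multiplier $2^{\lceil \log_2 \lceil \log_2 d \rceil \rceil}$ at $d=13$: it happens that $\lceil \log_2 13 \rceil = 4$ is already a power of $2$, so the outer rounding is a no-op and the factor collapses to $4$ rather than jumping to the next power of two. This is a small numerical coincidence worth checking rather than assuming, but once noted the product $4 \cdot 4 \cdot 5 = 80$ is immediate.
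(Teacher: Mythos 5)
Your proposal is correct and takes exactly the same approach as the paper, which presents this corollary as an immediate consequence of combining Theorem~\ref{thm:q2formal} with the 5-local, 13-qudit construction of Section~4. Your arithmetic ($\lceil\log_2 13\rceil=4$, $\lceil\log_2 4\rceil=2$, $4\cdot 2^2\cdot 5=80$) is the intended computation, and you correctly note the incidental fact that $4$ is already a power of two so the outer rounding contributes nothing.
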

In practice the locality could be reduced quite a bit, probably below 20 with some work.

\section{Future directions}
The seven complexity classes that are known to occur as strong QCSPs are now, in rough order of difficulty:
\begin{enumerate}
\item \cls{P}: Classical, no proof, deterministic checks.
\item \cls{coRP}: Classical, no proof, probabilistic checks.
\item \cls{NP}: Classical, classical proof, deterministic checks.
\item \cls{MA}: Classical, classical proof, probabilistic checks.
\item \cls{BQP_1}: Quantum, no proof, probabilistic checks.
\item \cls{QCMA}: Quantum, classical proof, probabilistic checks.
\item \cls{QMA_1}: Quantum, quantum proof, probabilistic checks.
\end{enumerate}
Are there obvious omissions we should expect to look in, or does this list seem complete? It seems natural in one way: we can have classical or quantum verifiers; we can have no proofs, classical proofs, or quantum proofs; and we can have deterministic or probabilistic verification. This would produce 12 classes in total. But we cannot have a quantum proof for a classical verifier, bringing us down to 10 classes. And it is very hard to force deterministic verification on a quantum verifier; \cls{EQP} is a difficult class to study. Any construction of an \cls{EQP}-complete constraint problem would likely require knowledge of particular forms of circuits that are powerful enough to capture the full power of \cls{EQP}, while constrained enough to guarantee that they always produce deterministic results.
\par It is clear, though, that the set of gates can be freely exchanged. There is the question if there are gate sets that are more powerful than classical computation but weaker than universal quantum computation. This is plausibly the case for nonstandard models of computation, such as sampling problems \cite{Boul18}\cite{Knil02} or one-clean-qubit models\cite{Fujii18}\cite{Suss20}, but is less clear for the standard quantum circuit model. But if it was discovered that there was such a set of intermediate-power gates, we would immediately have a corresponding constraint problem class which captures its difficulty.
\par The same questions linger for weak QCSPs. \cls{EQP}-complete problems are extremely unlikely to exist among weak QCSPs, as weak QCSPs do not depend on gate sets and \cls{EQP} problems genearlly do. We can also ask about exotic gate sets yielding two-sided error complexity classes as weak QCSPs.
\par The constructions given herein is certainly very large, both in arity of the clauses and local dimension of the qudits. It is natural to ask if there are smaller and more natural constraint problems that could realize the classes we discussed. In particular, finding a 3- or even 2-local constraint would be exciting. This does not seem implausible given that 2-local  Hamiltonian problems are in general \cls{QMA}-hard. Ideally we would even have something analogous to Theorem 8 for locality reduction, that all QCSPs could be made 2-local using high-dimensional qudits. In classical CSPs these are well-studied under the name of {\em binary} constraint problems.\cite{COHEN201912}
\par Finally, regarding \cls{coRP} and \cls{BPP}, it is somewhat surprising that these easy classical complexity class arises from a purely quantum Hamiltonian. It would be good to reformulate the problem as closely as possible in purely classical terms. Aharanov and Grilo have recently performed a reformulation like this for \cls{MA}, transforming the language of stoquastic Hamiltonian constraints into a much simpler classical problem\cite{Ahar20}.

\renewcommand{\abstractname}{\large Acknowledgements}
\begin{abstract}
I would like to thank Bela Bauer, Scott Aaronson, Alex Grilo, and Sevag Gharibian for useful feedback and discussions. This research was funded by Microsoft Station Q.
\end{abstract}

\printbibliography

\typeout{get arXiv to do 4 passes: Label(s) may have changed. Rerun}
\end{document}